\documentclass[11pt]{article}
\usepackage[letterpaper,margin=1in]{geometry}
\usepackage[T1]{fontenc}
\usepackage{lmodern,amsmath,amssymb,amsthm,mathtools,microtype}
\usepackage[hidelinks]{hyperref}
\usepackage{enumitem,needspace}
\setlist[itemize]{leftmargin=*,itemsep=3pt,topsep=5pt}
\setlist[enumerate]{leftmargin=*,itemsep=3pt,topsep=5pt}
\usepackage{todonotes}

\newtheorem{theorem}{Theorem}[section]
\newtheorem{lemma}[theorem]{Lemma}

\newtheorem{corollary}[theorem]{Corollary}
\theoremstyle{definition}

\DeclareMathOperator{\Sym}{Sym}
\DeclareMathOperator{\cl}{cl}
\DeclareMathOperator{\OPT}{OPT}
\DeclareMathOperator{\WFA}{WFA}
\DeclareMathOperator{\sgn}{sgn}
\newcommand{\K}{\mathbb K}
\newcommand{\F}{\mathbb F}
\newcommand{\R}{\mathbb R}
\newcommand{\val}{\nu}
\newcommand{\qs}{\mathcal Q}

\newcommand{\extcost}{\mathrm{ExtCost}}

\title{The $k$-server conjecture is true}
\author{Christian Coester\qquad Elias Koutsoupias\qquad Marek Zbysiński\\[4pt]University of Oxford}
\date{\today}
\begin{document}
\maketitle

\begin{abstract}
The $k$-server conjecture states that a deterministic online algorithm can achieve competitive ratio $k$ on every metric space. We prove the conjecture. Specifically, we show that the work function algorithm satisfies it.

Our proof uses a natural algebraic representation of the work function as a matrix, which encodes all feasible paths to reach a configuration. In this representation, the minimum and addition operations arising in the definition of optimal costs correspond to addition and multiplication of formal expressions, and each work function value corresponds to the determinant of $k$ columns of the matrix. A request arrival updates the representation via a change of basis and row replacement. The amortized analysis is based on a potential function defined in terms of a larger matrix whose coordinates are pairs of coordinates of the original matrix representation.
\end{abstract}

\section{Introduction}\label{sec:introduction}
The $k$-server problem, introduced by Manasse, McGeoch and Sleator~\cite{ManasseMS88}, is a central problem in competitive analysis. It has been repeatedly referred to as the ``holy grail'' of the field \cite{BuchbinderN09,BansalBN10,DehghaniEHLS17,Lee18,CoesterKL21,CoesterK21,FreiKSW25,CoesterU25,GehnenKN25,BrilliantovBA26}, and many techniques originally developed for the $k$-server problem have gone on to impact various other problems in online algorithms. The problem's definition is simple: There are $k$ servers located at points of a metric space. At each time step, a request arrives at a point of the metric space. An online algorithm must serve the request immediately by moving a server to the requested location, without knowledge of future requests. The goal is to minimize the total distance traveled by servers.

When introducing the problem, \cite{ManasseMS88} showed that the competitive ratio of any deterministic online algorithm is at least $k$ on every metric space with $n>k$ points, showed that this is tight when $k=2$ or $n=k+1$, and conjectured that it is tight in general. This question became known as the \emph{$k$-server conjecture}, and has been a driving force for progress in competitive analysis. The first algorithm with a finite competitive ratio on general metrics was obtained in~\cite{FiatRR90}, achieving a competitive ratio exponential in $k$. This was improved to a smaller exponential using the (randomized) Harmonic algorithm in~\cite{Grove91} and~\cite{BartalG00}, which can be derandomized using the techniques of~\cite{Ben-DavidBKTW94}. A substantial improvement was achieved by Koutsoupias and Papadimitriou~\cite{KoutsoupiasP95}, who showed that the \emph{work function algorithm (WFA)} is $(2k-1)$-competitive on general metrics. Another proof of this result was given in~\cite{Koutsoupias99}.

The WFA was introduced in~\cite{ChrobakL92}, and proposed independently also by Karloff and by McGeoch and Sleator. Its definition is simple and generic, and extends far beyond the $k$-server problem: For a configuration $X$ (e.g., the positions of $k$ servers in the $k$-server problem) and time $t$, denote by $w_t(X)$ the minimal (offline) cost to serve the requests up to time $t$ and then end up at configuration $X$. The function $w_t$ is called the work function, and its values can be computed online at time $t$. WFA moves at each time to the configuration that minimizes the sum of its work function value plus the cost (distance) of moving there from the previous configuration.

The conjectured bound of $k$ was established for WFA in many special cases, including $k=2$~\cite{ChrobakL92}, $n=k+1$ (folklore, cf.~\cite{Koutsoupias09}), $n=k+2$~\cite{KoutsoupiasP96}, line metrics~\cite{BartalK04}, weighted stars~\cite{BartalK04} (equivalent to the weighted paging problem), multirays~\cite{CoesterK21}, $k=3$ in the Manhattan plane~\cite{BeinCL02}, $k=3$ on trees~\cite{CoesterK21}, and $k=3$ on circles~\cite{HuangZ24}. Additionally, the Double Coverage algorithm achieves the conjectured bound of $k$ on the line~\cite{ChrobakKPV91} and, more generally, trees~\cite{ChrobakL91}. We show that WFA is $k$-competitive on general metric spaces, thereby confirming the $k$-server conjecture.

\subsection{Preliminaries and our result}
The $k$-server problem is defined on a metric space $(\mathcal M,d)$. There are $k$ servers starting at locations $s_1,\dots,s_k\in\mathcal M$, and a sequence of requests $r_1,\dots,r_T\in\mathcal M$ is revealed online. For convenience, we write $M:=\{s_1,\dots,s_k,r_1,\dots,r_T\}$ for the corresponding set of $k+T$ distinct location labels. Then $d$ induces a pseudometric on $M$; the reason it may only be a pseudometric rather than a metric is that $M$ can contain distinct labels for the same point at distance zero.

A configuration is a $k$-element subset of $M$, representing the locations of $k$ servers. For configurations $X=\{x_1,\ldots,x_k\}$ and $Y=\{y_1,\ldots,y_k\}$, their distance is
\begin{equation}\label{eq:configuration-distance}
D(X,Y)=\min_{\sigma\in S_k}\sum_{i=1}^k d(x_i,y_{\sigma(i)}).
\end{equation}
This is the minimum cost of moving the servers from $X$ to $Y$.

Denote the initial configuration by $C_0=\{s_1,\ldots,s_k\}$. At each time $t$, the algorithm must serve the current request $r_t$ by moving to a configuration $C_t\ni r_t$, paying movement cost $D(C_{t-1},C_t)$. The goal is to minimize the total cost.

The work function is the function $w_t$ that maps a configuration $X$ to the minimum (offline) cost to serve the first $t$ requests and then move to configuration $X$. 
It is well-known (see, e.g., \cite{Koutsoupias09}) that the work function can be computed online by dynamic programming, via $w_0(X)=D(C_0,X)$ and the recurrence
\begin{equation}\label{eq:recurrence}
w_{t}(X)=
\begin{cases}
w_{t-1}(X),&r_t\in X,\\[2pt]
\displaystyle\min_{x\in X}\bigl\{w_{t-1}(X-x+r_t)+d(r_t,x)\bigr\},&r_t\notin X.
\end{cases}
\end{equation}
Thus, the total optimal offline cost can be written as $\OPT_{C_0}(r_1,\ldots,r_T)=\min_X w_T(X)$.

The work function algorithm (WFA) serves the request at time $t$ in a configuration $C_t\ni r_t$ minimizing $w_t(C_t)+D(C_{t-1},C_t)$. For simplicity, we assume here a lazy implementation with $C_t\subseteq C_{t-1}\cup\{r_t\}$, although our result extends to any other tie-breaking rule. We denote the cost of WFA on a request sequence $\sigma$ by $\WFA_{C_0}(\sigma)$.

The following standard accounting argument is the extended-cost method of Chrobak and Larmore~\cite{ChrobakL92}: Instead of bounding directly the cost $D(C_{t-1}, C_t)$ of WFA, we bound the so-called extended cost $\max_X(w_t(X)-w_{t-1}(X))$. We include its proof in the appendix for completeness.

\begin{lemma}[Extended-cost accounting~\cite{ChrobakL92}]\label{lem:accounting}
Let $C_0,C_1,\ldots,C_T$ be any WFA trajectory, and let $\extcost_t=\max_X(w_t(X)-w_{t-1}(X))$. Then
\begin{equation}\label{eq:accounting}
\WFA_{C_0}(r_1,\ldots,r_T)+w_T(C_T)\le\sum_{t=1}^T \extcost_t.
\end{equation}
\end{lemma}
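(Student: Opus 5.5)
The plan is to prove the per-step inequality
\[
D(C_{t-1},C_t)+w_t(C_t)-w_{t-1}(C_{t-1})\le \extcost_t
\]
for each $t$, and then sum over $t=1,\dots,T$. The left-hand sides telescope to $\WFA_{C_0}(r_1,\ldots,r_T)+w_T(C_T)-w_0(C_0)$. Since $w_0(C_0)=D(C_0,C_0)=0$, this gives \eqref{eq:accounting}.

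To prove the per-step inequality, I first use that WFA chooses $C_t$ to minimize $w_t(C)+D(C_{t-1},C)$ over configurations $C\ni r_t$. It therefore suffices to exhibit a single configuration $Y\ni r_t$ with
\[
w_t(Y)+D(C_{t-1},Y)\le w_t(C_{t-1})
\]
(or with the right-hand side replaced by $w_{t-1}(C_{t-1})$ plus an amount dominated by the extended cost). Indeed, we then get
\[
D(C_{t-1},C_t)+w_t(C_t)-w_{t-1}(C_{t-1})\le w_t(C_{t-1})-w_{t-1}(C_{t-1})\le\extcost_t .
\]

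The case $r_t\in C_{t-1}$ is immediate: take $Y=C_{t-1}$. In the case $r_t\notin C_{t-1}$, the recurrence \eqref{eq:recurrence} gives some $x\in C_{t-1}$ with
\[
w_t(C_{t-1})=w_{t-1}(C_{t-1}-x+r_t)+d(r_t,x).
\]
Take $Y=C_{t-1}-x+r_t$. Since $Y\ni r_t$, we have $w_t(Y)=w_{t-1}(Y)$, and $D(C_{t-1},Y)\le d(x,r_t)$. Hence
\[
w_t(Y)+D(C_{t-1},Y)\le w_{t-1}(Y)+d(r_t,x)=w_t(C_{t-1}),
\]
which is exactly what is needed.

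The step that needs the most care is justifying that WFA's choice is compared against all configurations $Y\ni r_t$ and not only lazy ones. The definition minimizes over all $C_t\ni r_t$, and the $Y$ constructed above is itself lazy, so this causes no difficulty under either reading. Nothing else is needed: no Lipschitz property of $w_t$ is required, because the bound only uses $w_t(C_{t-1})-w_{t-1}(C_{t-1})\le\max_X(w_t(X)-w_{t-1}(X))$.
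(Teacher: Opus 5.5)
Your proof is correct and follows essentially the paper's route: compare WFA's choice with the minimizer $C_{t-1}-x+r_t$ of the recurrence to get $w_t(C_t)+D(C_{t-1},C_t)\le w_t(C_{t-1})$, subtract $w_{t-1}(C_{t-1})$, bound the result by $\extcost_t$, and telescope using $w_0(C_0)=0$. The only difference is that the paper states the per-step relation as an equality, relying on the lazy implementation, whereas you prove only the inequality that is needed, so your argument works for any tie-breaking rule.
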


For a configuration $X=\{x_1,\ldots,x_k\}$, define its clique weight by
\[
\cl(X)=\sum_{1\le i<j\le k}d(x_i,x_j).
\]

The following theorem establishes the $k$-server conjecture.

\begin{theorem}\label{thm:main}
Let $(\mathcal M,d)$ be any metric space, and let $C_0=\{s_1,\ldots,s_k\}$ be the initial configuration. For every finite request sequence $\sigma$, the work function algorithm satisfies
\[
\WFA_{C_0}(\sigma)\le k\,\OPT_{C_0}(\sigma)+\cl(C_0).
\]
\end{theorem}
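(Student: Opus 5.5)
The proof goes through the extended-cost accounting of Lemma~\ref{lem:accounting}. By that lemma, and since $w_T(C_T)\ge 0$, it suffices to show
\[
\sum_{t=1}^T \extcost_t \le k\min_X w_T(X) + \cl(C_0) + w_T(C_T),
\]
or any inequality of the same strength. I would find a potential $\Phi_t$, a function of the work function $w_t$, that satisfies three conditions. First, $\Phi_0 \ge -\cl(C_0)$ up to the normalization of $w_0$. Second, $\Phi_T \le k\min_X w_T(X) + (\text{something absorbed by } w_T(C_T))$. Third, $\extcost_t \le \Phi_t-\Phi_{t-1}$ for every $t$. Summing the third condition telescopes and gives the theorem.

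The classical candidate is
\[
\Phi_t=\min_{A,B_1,\dots,B_k}\Bigl(k\,w_t(A)+\sum_i w_t(B_i) - \textstyle\sum_{i} \sum_{j} d(a_i,b_{ij})\Bigr)-\text{clique-type terms}.
\]
This is the shape of the Koutsoupias--Papadimitriou ``minimizer'' argument; that argument only gives $2k-1$ because its quasiconvexity lemma does not control enough slack.

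Following the abstract, I would instead encode the work function algebraically. Pass to the tropical (min,+) semiring, or to formal power series $\K=\F((\epsilon))$ with valuation $\val$, so that $\min$ becomes addition and $+$ becomes multiplication. Represent $w_t$ by a $k\times|M|$ matrix $W_t$ over $\K$ whose entry in row $i$, column $x$ encodes paths of server $i$ ending at $x$. Then $w_t(X)=\val(\det W_t[X])$, with $W_t[X]$ the $k\times k$ minor on the columns of $X$, after fixing generic coefficients so that no tropical cancellation occurs. I would verify two facts about this representation. At $t=0$, the determinant expansion reproduces the matching formula \eqref{eq:configuration-distance}. A request $r_t$ acts as a change of basis that makes one row supported appropriately, followed by replacing that row with the column data $d(r_t,\cdot)$. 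This reproduces recurrence \eqref{eq:recurrence} through the Laplace expansion along the replaced row.

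The potential would live on the larger matrix whose coordinates are pairs of coordinates of $W_t$, that is, on $W_t\otimes W_t$ or an exterior or symmetric square of it. I would take $\Phi_t$ to be a minimal valuation over suitable $2k$-column minors, essentially $\min_{X,Y}\bigl(w_t(X)+w_t(Y)\bigr)$ corrected by pairwise distance terms. The Plücker relations should replace quasiconvexity: for configurations $X,Y$ and $x\in X$, the exchange relation gives $w(X)+w(Y)\ge \min_{y}\{w(X-x+y)+w(Y-y+x)\}$ exactly. With that in hand, I would show that the update at $r_t$ raises $\Phi$ by at least $\max_X(w_t(X)-w_{t-1}(X))$. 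The weights should be chosen so that the $k$ appears as the number of rows that can absorb the change.

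The main obstacle is this last step, the one-step inequality $\extcost_t\le\Phi_t-\Phi_{t-1}$. It needs a careful choice of which pair-minors enter $\Phi$, and an argument that the row-replacement increases the minimum by the right amount without cancellation in valuations. To handle it, I would first reduce to the maximizer $X^\ast$ of $w_t-w_{t-1}$. I would then apply a Cauchy--Binet expansion of the pair matrix across the row replacement, to exhibit an explicit term attaining the needed increase. The boundary terms would then be checked directly: $\cl(C_0)$ should come out of the initial pair minors through $d(s_i,s_j)$.
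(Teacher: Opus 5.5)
\textbf{There is a genuine gap.} Your scaffolding agrees with the paper's. You reduce via Lemma~\ref{lem:accounting} to bounding $\sum_t\extcost_t$, and you lift the work function to $w(X)=\val(\det(q_{x_1},\ldots,q_{x_k}))$ using fresh coefficients. You process a request by a determinant-preserving change of basis followed by a row replacement. You read off $\cl(C_0)$ from the initial entries of a ``pair'' matrix via $d(s_i,s_j)$.

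\emph{The missing step.} You never define the potential, and you explicitly leave open the one-step inequality $\extcost_t\le\Phi_t-\Phi_{t-1}$. That inequality is the whole content of the theorem.

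\emph{Why your concrete suggestion fails.} A potential that is ``essentially $\min_{X,Y}(w_t(X)+w_t(Y))$'' corrected by distances is homogeneous of degree $2$ in the work function. So $\Phi_T$ is at most roughly $2w_T(X_T)$. Telescoping and Lemma~\ref{lem:accounting}, together with $w_T(C_T)\ge w_T(X_T)$, would then make WFA $1$-competitive, contradicting the lower bound $k$. This also clashes with your own second condition, which implicitly asks for degree $k+1$: $k$ copies of $w_T$ plus one absorbed by $w_T(C_T)$.

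\emph{The choice you leave open is the crux.} A change of basis $A$ multiplies the maximal minors of $W_t\otimes W_t$, $\Lambda^2 W_t$ and $\Sym^2 W_t$ by $(\det A)^{2k}$, $(\det A)^{k-1}$ and $(\det A)^{k+1}$ respectively. Consequently:
\begin{itemize}
\item the tensor square can at best reproduce ratio $2k-1$;
\item the exterior square is ruled out by the lower bound;
\item only the symmetric square gives the terminal bound $\Psi\le(k+1)w(X)-\cl(X)$ of Lemma~\ref{lem:terminal-bound}.
\end{itemize}

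\emph{What the paper does.} It takes $\Psi=\mu(\qs)$ with $\qs=[z^{-d(x,y)}q_xq_y]_{x\le y}$, the minimum valuation over all $\binom{k+1}{2}\times\binom{k+1}{2}$ minors. Its one-step argument has the opposite shape from yours (pick the maximizer $X^\ast$, expand by Cauchy--Binet, exhibit a term attaining the increase, worry about cancellation). What is needed is an upper bound on the old potential by one minor, and a lower bound on every new minor, where cancellation only helps.

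Concretely, the paper first normalizes $q_r=e_1$ with $\det B=1$. This leaves $\Psi$ unchanged, since $\det(\Sym^2 B)=(\det B)^{k+1}=1$. It then proves three inequalities:
\begin{enumerate}
\item $\Psi\le\mu(\widehat{\qs})+\mu(\widehat{H}\,)$ is not quite it; rather $\Psi\le\mu(\widehat{\qs})+\mu(H)$. This uses a block-triangular minor formed by the columns $(r,x)$, which equal $e_1h_x$ and are supported on the rows $e_1e_j$, together with columns minimizing the bottom block.
\item $\Psi'\ge\mu(\widehat{\qs})+\mu(\widehat{H})$. The row replacement leaves the bottom block $\widehat{\qs}$ untouched. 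Every new column is a combination of $e_1^2$, the vectors $e_1\pi h_x$, and bottom-block columns, with coefficients of nonnegative valuation because $d(r,x)+d(r,y)-d(x,y)\ge0$.
\item $w'(X)-w(X)\le\mu(\widehat{H})-\mu(H)$ for every $X$, by Cramer's rule applied to $e_1=\sum_ic_iq_{x_i}$.
\end{enumerate}
The shared term $\mu(\widehat{\qs})$ cancels. No Pl\"ucker relation or quasiconvexity is used.

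A smaller technical point: $\F((\epsilon))$ has only integer exponents, so for real distances you need real exponents, as in the paper's finite formal sums.
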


\subsection{Overview of the proof}

By Lemma~\ref{lem:accounting}, it suffices to prove
\begin{equation*}
\sum_{t=1}^T \extcost_t
\le(k+1)w_T(X_T)+\cl(C_0)-\cl(X_T),
\end{equation*}
where $X_T$ is the final optimal offline configuration satisfying $\OPT_{C_0}(\sigma)=w_T(X_T)$.
To prove this bound, it is enough to prove that there exists a potential function $\Psi_t$ which satisfies
\begin{equation}\label{eq:intro-three}
\Psi_0=-\cl(C_0),\qquad
\extcost_t\le\Psi_t-\Psi_{t-1},\qquad
\Psi_t\le(k+1)w_t(X)-\cl(X)
\end{equation}
for every configuration $X$. In order to represent the potential, we need a more complex algebraic structure than the numerical work function. 

We introduce a novel approach which has been standard in the theory of valuated matroids, but has not been previously used for proving the competitiveness of the WFA. We use the fraction field $\K$ which extends the set of formal sums of the form $\sum_{a\in A} f_a z^a$, where $A\subset\R$ is finite. We can define the valuation $\val$ of a formal sum to be the smallest exponent of the free variable $z$ with a nonzero coefficient. For example $\nu(7z^3+4z^6)=3$. The definition extends to the fraction field $\K$ by $\val(f/g)=\val(f)-\val(g)$. This function satisfies the following properties:
\begin{equation*}
\val(ab)=\val(a)+\val(b),\qquad
\val(a+b)\ge\min\{\val(a),\val(b)\}.
\end{equation*}
Since the work function recurrence is defined by taking minimums of sums, using $\K$ and $\val$ are natural candidates for alternative expression of the work function involving ordinary multiplication and addition instead.

To express the current state of the work function, we maintain vectors $q_x\in \K^k$ for every point $x$. In Section~\ref{sec:lift}, we prove that the current work function $w$ can be represented using $q_x$:
\begin{equation*}
w(\{x_1,\dots,x_k\})= \nu(\det(q_{x_1},\ldots,q_{x_k}))
\end{equation*}
Upon arrival of a request $r$, we update the representation by performing the following operations:
\begin{enumerate}
\item Change the vector space basis so that $q_r=e_1$ is a base vector and the determinants are preserved.
\item Replace the first coordinate of every $q_x$ in the new basis by a value proportional to $z^{d(r,x)}$ with valuation $d(r,x)$, with a coefficient that prevents the value to be cancelled out by the determinant. This is achieved by including sufficiently many independent coefficient variables in the field.
\end{enumerate}
We can now consider the problem only in its algebraic form as the work function and extended cost can be represented using only the vectors $q_x$ and their transformations. We use a potential based on the quadratic product $q_x q_y$, which is a $\binom{k+1}{2}$-dimensional vector with coordinates indexed by pairs $(i,j)$ with $i\le j$, whose entries are obtained by taking products of the entries of $q_x$ and $q_y$ in the respective coordinates. The potential is $\Psi=\mu(\qs)$ where $\qs=[\,z^{-d(x,y)}q_xq_y\,]_{x\le y}$ and $\mu(G)=\min_{1\le i_1<\cdots<i_d\le m}
\val(\det(g_{i_1},\ldots,g_{i_d}))$. The $z^{-d(x,y)}$ term and the $\mu$ function draw parallel to previously known potentials for proving the competitiveness of certain cases of the WFA. The other potentials have also used the selection of a subset of points to minimize an expression which contains the work function values, here expressed indirectly using $q_x$, and the distance pairs of used points with negative sign, here expressed using $z^{-d(x,y)}$. 

The check that the potential satisfies $\Psi_t\le(k+1)w_t(X)-\cl(X)$ for any $X$ and $\Psi_0=-\cl(C_0)$ is a relatively straightforward computation shown in Lemmas~\ref{lem:terminal-bound} and~\ref{lem:initial-bound} respectively. 

The heart of the proof is showing that the potential satisfies $\extcost_t\le\Psi_t-\Psi_{t-1}$. In Section~\ref{sec:step}, we prove this by introducing inequalities regarding the current potential, the new potential and the extended cost. They all follow from the definition of the algebraic handling of a request $r$.

\subsection{Related work}\label{sec:related}
\paragraph{Related problems.} The $k$-server problem is closely related to a number of other online problems. For example, the paging (aka caching) problem corresponds to the special case where the metric space is a uniform metric, and was foundational for the field of competitive analysis~\cite{SleatorT85}. The $k$-server problem further belongs to the class of metrical task systems~\cite{BorodinLS92,BubeckCLL21,CoesterL22,BubeckCR23}, and $k$-server on a $(k+1)$-point metric is equivalent to metrical service systems~\cite{ChrobakL92,BubeckCR23}, with further close connections to layered graph traversal~\cite{FiatFKRRV98,BubeckCR25,CoesterT26}. Generalizations of the $k$-server problem include the weighted and the generalized $k$-server problem~\cite{FiatR94,SittersS06,BansalEK17,BansalEKN23,BijoyMC26}, the infinite server problem~\cite{CoesterKL21,BienkowskiBCJ20}, and the $k$-taxi problem~\cite{CoesterK19,GuptaKP24,CoesterP26}.

\paragraph{Work functions and WFA.} The WFA (and a parametrized variant where the work function term is scaled by a constant factor) have been applied to many other problems, including metrical task systems~\cite{BorodinE98}, layered graph traversal~\cite{Burley96}, list update~\cite{AndersonHKRS02}, matching~\cite{KoutsoupiasN03}, generalized $k$-server \cite{Sitters14}. Moreover, work functions have been instrumental to advancements on convex body chasing~\cite{Sellke20,ArgueGTG21}. Variants of WFA have further been proposed as candidates for addressing the $k$-taxi problem~\cite{CoesterK19} and the dynamic optimality conjecture~\cite{Sitters14}.

\paragraph{Quasiconvexity, valuated matroids, and tropical geometry.} An important property of $k$-server work functions is \emph{quasiconvexity}, which was introduced in the online algorithms context by Koutsoupias and Papadimitriou~\cite{KoutsoupiasP95}. However, equivalent or closely related notions have appeared independently across several fields under different terminology. In fact, quasiconvex work functions in the $k$-server setting are mathematically equivalent to \emph{valuated matroids}, a concept introduced by Dress and Wenzel~\cite{DressW1990} in connection with a variant of the greedy algorithm. In economics, Kelso and Crawford~\cite{KelsoC82} introduced the closely related \emph{gross substitutes} condition as a sufficient condition for the existence of Walrasian equilibria. These structures also play a central role in discrete optimization through their equivalence to $M^\natural$-concave functions~\cite{Murota2003}.

The foundational paper by Dress and Wenzel on valuated matroids~\cite{DressW1992} shows that valuated matroids arise naturally from valuations of determinants and Pl\"ucker relations. This relationship has since been further expanded by advances in tropical geometry (see for example, Speyer and Sturmfels~\cite{SpeyerS2004} and Maclagan and Sturmfels~\cite{MaclaganS2015}).

\paragraph{Randomized $k$-server.} A separate direction for the $k$-server problem concerns randomized algorithms (against oblivious adversaries): In this setting, a competitive ratio of $O(\log k)$ is known to be achievable on uniform metrics~\cite{FiatKLMSY91} and weighted stars~\cite{BansalBN12}, matching an $\Omega(\log k)$ lower bound that holds on every metric of $n>k$ points~\cite{BubeckCR23}. The \emph{randomized $k$-server conjecture}, which states that a $O(\log k)$-competitive randomized algorithm exists for all metric spaces, was refuted in~\cite{BubeckCR23} by an $\Omega(\log^2 k)$ lower bound that holds on some metric spaces, even when $n=k+1$. A first polylog$(k,n)$-competitive algorithm for arbitrary $n$-point metrics was achieved in~\cite{BansalBMN15}, and improved to $O(\log n\log^2 k)$ and $O(\log\Delta\log^3 k)$ in~\cite{BubeckCLLM18}, where $\Delta$ is the ratio between the largest and smallest non-zero distance of the metric space. Since both $n$ and $\Delta$ can be infinite, it remains unknown whether randomization provides an advantage on general metric spaces, and the best known competitive ratio even when randomization is allowed is achieved by WFA.


\section{From min-plus recurrences to determinant columns}\label{sec:lift}
This section develops the algebraic state from first principles. Throughout, $z$ is a formal symbol and its exponents represent costs. Se use additional variables as coefficients to distinguish combinatorial choices; their values are never specialized to numbers. We give configurations an ordering whenever they are used to index determinant columns.

\subsection{Formal expressions and their lowest exponents}
For each initial server index $i\in[k]$ and location label $x\in M$, introduce an independent variable $\gamma_{i,x}$. For each time $t\in[T]$ and label $x\in M$, introduce an independent variable $\xi_{t,x}$. Let $\F$ be the field of rational functions over $\mathbb Q$ in all these variables.

Consider finite formal sums
\[
f=\sum_{a\in A} f_a z^a,\qquad A\subseteq\R\text{ finite},\quad f_a\in\F.
\]
Combine equal exponents when adding, and multiply using $z^a z^b=z^{a+b}$. If $f\ne0$, define
\[
\val(f)=\min\{a:f_a\ne0\};\qquad \val(0)=\infty.
\]
For instance, $\val(3z^2-7z^5)=2$ and $\val(z^{-1}+z^4)=-1$. Every nonzero element of the coefficient field $\F$ has valuation zero.

The least-exponent term in a product of two nonzero sums is the product of their least-exponent terms. Its coefficient is nonzero because $\F$ is a field. Therefore these finite sums have no zero divisors. We may form their fraction field $\K$ and extend the definition by
\[
\val(f/g)=\val(f)-\val(g)\qquad(fg\ne0).
\]
This is well defined: if $f/g=f'/g'$, then $fg'=f'g$, and taking least exponents of these products gives the same difference. The two rules we will use are
\begin{equation}\label{eq:valuation}
\val(ab)=\val(a)+\val(b),\qquad
\val(a+b)\ge\min\{\val(a),\val(b)\}.
\end{equation}
The second inequality can be strict. For example, $z^2+(-z^2+z^5)=z^5$. It is precisely this cancellation issue that requires care when replacing a min-plus recurrence by an algebraic one.

\begin{lemma}[Fresh coefficients prevent cancellation]\label{lem:fresh}
Suppose $f_1,\ldots,f_m\in\K$ use none of the independent coefficient variables $\eta_1,\ldots,\eta_m$. For real numbers $a_1,\ldots,a_m$, if at least one $f_i$ is nonzero, then
\[
\val\left(\sum_{i=1}^m\eta_i z^{a_i}f_i\right)
=\min_i\bigl(a_i+\val(f_i)\bigr).
\]
\end{lemma}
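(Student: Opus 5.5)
The inequality $\ge$ is immediate from \eqref{eq:valuation}: each term has valuation $\val(\eta_i)+a_i+\val(f_i)=a_i+\val(f_i)$, because $\eta_i\in\F$ is a nonzero element of the coefficient field and so has valuation zero. Terms with $f_i=0$ vanish and contribute $\infty$, so they do not affect the minimum. The real content is the reverse inequality, namely that the lowest-order terms cannot cancel. The plan is to clear denominators, reducing to finite formal sums, and then read off the coefficient of the minimal exponent as a polynomial expression that is linear in the $\eta_i$.

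\emph{Clearing denominators.} Write each nonzero $f_i=g_i/h_i$, where $g_i,h_i$ are finite formal sums whose coefficients lie in the subfield $\F'\subseteq\F$ of rational functions in the variables other than $\eta_1,\ldots,\eta_m$. This is possible because $f_i$ ``uses none of'' the $\eta_j$, which I interpret as $f_i$ lying in the fraction field of finite sums with coefficients in $\F'$. Put $h=\prod_{i:f_i\ne0}h_i$. Then $S:=\sum_i\eta_i z^{a_i}f_i=\bigl(\sum_i\eta_i z^{a_i}p_i\bigr)/h$, where $p_i=g_i\prod_{j\ne i}h_j$ is a finite sum with coefficients in $\F'$. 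Since $\val(p_i)-\val(h)=\val(f_i)$, it suffices to prove the lemma for the finite sums $p_i$.

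\emph{Finite-sum case.} Let $c=\min_i(a_i+\val(p_i))$ and let $I$ be the set of indices attaining this minimum. The coefficient of $z^c$ in $\sum_i\eta_i z^{a_i}p_i$ is
\[
\sum_{i\in I}\eta_i\,\mathrm{lc}(p_i),
\]
where $\mathrm{lc}(p_i)\in\F'\setminus\{0\}$ is the least-exponent coefficient of $p_i$. Indices outside $I$ contribute only exponents strictly larger than $c$. The $\eta_i$ are algebraically independent over $\F'$, so they are in particular linearly independent over $\F'$. Since $I$ is nonempty and every coefficient $\mathrm{lc}(p_i)$ is nonzero, this linear combination is a nonzero element of $\F$. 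Hence the valuation is exactly $c$.

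The only delicate point is the bookkeeping of which field the coefficients live in. One has to make sure that the $\eta_i$ are independent over the field containing all coefficients of the $p_i$. This is exactly what the hypothesis provides, since the $\eta_i$ are among the independent variables $\gamma,\xi$ and the $f_i$ are built from the others. Everything else is a direct consequence of \eqref{eq:valuation} and of the well-definedness of $\val$ on $\K$.
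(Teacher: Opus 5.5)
Your proof is correct and follows essentially the same route as the paper. Both arguments put the $f_i$ over a common denominator free of the $\eta_i$ to reduce to finite sums, then note that the coefficient at the minimal exponent is $\sum_{i\in I}\eta_i b_i$ with nonzero $b_i$ not involving the $\eta_j$, which cannot vanish because the $\eta_i$ are independent variables. Your separate treatment of the $\ge$ direction and your explicit choice of the subfield over which the $\eta_i$ are independent are slightly more careful than the paper's version.
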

\begin{proof}
Put the $f_i$ over a common nonzero denominator that uses none of the $\eta_i$. After this multiplication, all terms are finite sums. At the smallest exponent occurring on the right, the coefficient on the left is a nonempty sum $\sum_{i\in I}\eta_i b_i$, with nonzero $b_i$ independent of all the $\eta_j$. It cannot be zero: after clearing coefficient denominators it is a nonzero polynomial, linear in distinct independent variables $\eta_i$. Dividing by the common denominator gives the assertion.
\end{proof}

\subsection{The initial matching as a determinant}\label{sec:initial-columns}
For every label $x\in M$, including the initial labels themselves, define a column $q_x\in\K^k$ by
\begin{equation}\label{eq:initial-columns}
(q_x)_i=\gamma_{i,x}z^{d(s_i,x)},\qquad i\in[k].
\end{equation}
Thus one uniform formula defines all columns. If $X=(x_1,\ldots,x_k)$ has distinct labels, put
\[
u_0(X)=\det(q_{x_1},\ldots,q_{x_k}).
\]
The determinant expansion is
\begin{equation}\label{eq:matching-det}
u_0(X)=\sum_{\sigma\in S_k}\sgn(\sigma)
\left(\prod_{i=1}^k\gamma_{i,x_{\sigma(i)}}\right)
z^{\sum_{i=1}^k d(s_i,x_{\sigma(i)})}.
\end{equation}
The exponent of each term is the cost of a matching from $C_0$ to $X$. Different permutations give different coefficient monomials, even when their matching costs are equal. Hence the least-cost terms cannot cancel, and
\begin{equation}\label{eq:initial-lift}
\val(u_0(X))=D(C_0,X)=w_0(X).
\end{equation}
In particular, every $k$ distinct columns form a basis.

For two servers, this construction reads
\[
u_0(x,y)
=\gamma_{1,x}\gamma_{2,y}z^{d(s_1,x)+d(s_2,y)}
-\gamma_{1,y}\gamma_{2,x}z^{d(s_1,y)+d(s_2,x)}.
\]
The two possible assignments have opposite signs. If their costs tie, their coefficients still differ, so the minimum cost survives. Using independent symbols rather than unit coefficients is essential for this simple argument.

\subsection{A request as a row replacement}\label{sec:row-update}
We now maintain columns $q_x\in\K^k$ for which
\begin{equation}\label{eq:lift-invariant}
u(X):=\det(q_{x_1},\ldots,q_{x_k}),\qquad \val(u(X))=w(X)
\end{equation}
for every ordered configuration of distinct labels. Exchanging two labels changes the sign of $u(X)$, but leaves its valuation unchanged, as required for an unordered configuration.

Assume $k\ge2$ and a request arrives at $r$. The vector $q_r$ is nonzero, since it belongs to a basis of $k$ distinct columns. We first change coordinates so that $q_r=e_1$, while preserving every determinant exactly.

\begin{lemma}[Determinant-preserving normalization]\label{lem:normalization}
For every nonzero $v\in\K^k$, with $k\ge2$, there is an invertible matrix $B$ satisfying $Bv=e_1$ and $\det B=1$.
\end{lemma}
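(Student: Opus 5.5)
The plan is to build $B$ as the inverse of a matrix $A$ whose first column is $v$ and whose determinant is $1$. If we find an invertible $A\in\K^{k\times k}$ with $Ae_1=v$ and $\det A=1$, then $B:=A^{-1}$ satisfies $Bv=A^{-1}Ae_1=e_1$ and $\det B=1/\det A=1$. So the task reduces to completing $v$ to a matrix of determinant $1$.

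Since $v\ne0$, first complete $v$ to a basis of $\K^k$. Pick an index $j$ with $v_j\ne0$. Then the vectors $v$ together with the standard basis vectors $e_i$, $i\ne j$, form a basis. To see this, expand the determinant of the matrix with columns $v$ and $e_i$ ($i\ne j$) along the rows $i\ne j$; it equals $\pm v_j\ne0$. Let $A_0$ be this matrix, with $v$ in the first column and the $e_i$, $i\ne j$, in the remaining columns in some fixed order. Then $\det A_0=\pm v_j=:c\ne0$.

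Next, fix the determinant. Because $k\ge2$, there is at least one column other than the first. Scaling the second column of $A_0$ by $c^{-1}$ gives a matrix $A$ whose first column is still $v$ and with $\det A=c^{-1}c=1$. Setting $B=A^{-1}$ finishes the proof.

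There is no real obstacle. The only points needing care are that $\K$ is a field (shown above, since it is the fraction field of a domain), so $c^{-1}$ and $A^{-1}$ exist, and that $k\ge2$ is used exactly to have a column other than $v$ to absorb the scalar. This is why the lemma assumes $k\ge2$: for $k=1$ one would be forced to take $B=v^{-1}$, with $\det B=v^{-1}$ in general.
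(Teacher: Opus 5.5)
Your proof is correct and is essentially the paper's argument. The paper takes an invertible $T$ with $Tv=e_1$ from Gaussian elimination and rescales its second row by $(\det T)^{-1}$; you instead construct $T^{-1}$ explicitly by completing $v$ with standard basis vectors and rescale its second column, which uses $k\ge2$ in the same way to absorb the determinant.
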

\begin{proof}
Gaussian elimination gives an invertible matrix $T$ with $Tv=e_1$. Write $c=\det T\ne0$. Multiply the second row of $T$ by $c^{-1}$. The resulting matrix $B$ has determinant one, and still sends $v$ to $e_1$, whose second coordinate is zero.
\end{proof}

Apply this matrix to every old column. Since $\det B=1$, every $k$-column determinant is unchanged. We continue to call the transformed columns $q_x$. The matrix $B$ is chosen using only the old columns and therefore introduces no fresh coefficient variables.

Write $q_x=(a_x,\widehat q_x)$, with $a_x\in\K$ and $\widehat q_x\in\K^{k-1}$. In these coordinates, $q_r=(1,0)$. For request number $t$, set
\begin{equation}\label{eq:delta}
\delta_r=1,\qquad \delta_x=\xi_{t,x}z^{d(r,x)}\quad(x\ne r),
\end{equation}
and replace the columns by
\begin{equation}\label{eq:update}
q'_x=(\delta_x,\widehat q_x).
\end{equation}
Equivalently, replace the first row of the entire column matrix by $(\delta_x)_{x\in M}$ and leave the other rows unchanged. Notice that $\val(\delta_x)=d(r,x)$ for every $x$.

\begin{lemma}[Exact lifted recurrence]\label{lem:lift-update}
The new columns satisfy $\val(\det(q'_{x_1},\ldots,q'_{x_k}))=w'(X)$ for every configuration $X$.
\end{lemma}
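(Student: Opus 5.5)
We expand the new determinant along its first row, which is the row we replaced. Fix an ordered configuration $X=(x_1,\ldots,x_k)$ and write $Q'=(q'_{x_1},\ldots,q'_{x_k})$. Laplace expansion along row one gives
\[
\det Q'=\sum_{j=1}^k(-1)^{j+1}\delta_{x_j}\,\det\bigl(\widehat q_{x_1},\ldots,\widehat{\widehat q_{x_j}},\ldots,\widehat q_{x_k}\bigr),
\]
where the doubly hatted entry is omitted, so each term is a $(k-1)\times(k-1)$ minor of the lower block. The point is to identify each minor with an old determinant. Since $q_r=e_1$ in the normalized coordinates (Lemma~\ref{lem:normalization}, which preserves all old determinants), expanding $\det(q_r,q_{y_1},\ldots,q_{y_{k-1}})$ along the column $e_1$ yields exactly $\det(\widehat q_{y_1},\ldots,\widehat q_{y_{k-1}})$. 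Hence, for labels $y_i$ distinct from $r$,
\[
\val\bigl(\det(\widehat q_{y_1},\ldots,\widehat q_{y_{k-1}})\bigr)=w(\{r,y_1,\ldots,y_{k-1}\}).
\]
If some $y_i=r$, the minor contains the zero column $\widehat q_r=0$ and vanishes.

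The argument then splits into two cases. Suppose first that $r\notin X$. Every minor in the expansion corresponds to the old configuration $X-x_j+r$, and its coefficient is $\pm\xi_{t,x_j}z^{d(r,x_j)}$. The $\xi_{t,x}$ are fresh, because the old columns and the matrix $B$ use only the $\gamma$'s and the $\xi_{t',\cdot}$ with $t'<t$. The minors are nonzero, since they have finite valuation by the invariant~\eqref{eq:lift-invariant}. Lemma~\ref{lem:fresh}, applied with $\eta_j=\pm\xi_{t,x_j}$, therefore gives
\[
\val(\det Q')=\min_j\bigl(d(r,x_j)+w(X-x_j+r)\bigr),
\]
which is exactly the recurrence~\eqref{eq:recurrence}. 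Now suppose $r\in X$, say $r=x_{j_0}$. Every minor that omits some $x_j\neq r$ still contains $\widehat q_r=0$ and vanishes. Only the term with $j=j_0$ survives, with $\delta_r=1$, and its minor has valuation $w(X)$. So $\val(\det Q')=w(X)=w'(X)$.

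The main obstacle is the bookkeeping in the first case. We must check that the hypotheses of Lemma~\ref{lem:fresh} hold exactly. The $\xi_{t,x}$ must be absent from all minors, which requires confirming that the Gaussian elimination in Lemma~\ref{lem:normalization} introduces no new variables. The signs $(-1)^{j+1}$ must also be harmless, which they are, since $-\xi$ serves equally well as a fresh variable, or we can absorb the sign into the minor. Finally, we need $k\ge2$ so that the minors are defined. For $k=1$ we would handle the statement trivially or separately, but the section assumes $k\ge2$.
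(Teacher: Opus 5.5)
Your proof is correct and follows the paper's argument. You expand along the replaced first row, identify each $(k-1)$-minor with an old determinant containing $q_r=e_1$, and then treat two cases: for $r\in X$ only the $\delta_r=1$ term survives, and for $r\notin X$ you apply Lemma~\ref{lem:fresh} with the fresh $\xi_{t,x_j}$. The only cosmetic difference is that the paper writes each signed minor directly as the replacement determinant $u(x_1,\ldots,x_{i-1},r,x_{i+1},\ldots,x_k)$, which absorbs the Laplace signs and gives the exact identity $u'(X)=u(X)$ when $r\in X$; you track the signs $(-1)^{j+1}$ explicitly and compare only valuations, which suffices for this lemma.
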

\begin{proof}
Expand the new determinant along its first row. The signed minor accompanying $\delta_{x_i}$ is the determinant of the old columns with column $i$ replaced by $e_1=q_r$. Indeed, expanding that replacement determinant along its $e_1$ column gives the same signed minor. Thus
\begin{equation}\label{eq:lift-recurrence}
u'(x_1,\ldots,x_k)
=\sum_{i=1}^k\delta_{x_i}
u(x_1,\ldots,x_{i-1},r,x_{i+1},\ldots,x_k).
\end{equation}
The first-row signs are already included in the replacement determinants; no extra signs are missing from this formula.

If $r\in X$, the summand replacing $r$ itself is $u(X)$, because $\delta_r=1$. Every other summand has two copies of the column $q_r$ and is zero. Hence $u'(X)=u(X)$ exactly.

If $r\notin X$, the factors $\xi_{t,x_i}$ are distinct fresh variables, and the old determinants contain none of them. Lemma~\ref{lem:fresh} gives
\[
\val(u'(X))
=\min_i\bigl\{d(r,x_i)+\val(u(X-x_i+r))\bigr\}
=w'(X),
\]
where the last equality is~\eqref{eq:recurrence}. This proves the invariant after the request.
\end{proof}

After the request column becomes $e_1$, determinants containing that column depend only on the other columns' last $k-1$ coordinates. The row replacement combines exactly those determinants, with the appropriate costs for moving from the request to the final locations. Alternation eliminates configurations in which the request would occur twice.

For example, with two servers the old columns have the form $q_x=(a_x,b_x)$ and $q_r=(1,0)$. The new two-column determinant is
\[
\det(q'_x,q'_y)=\delta_xb_y-\delta_yb_x
=\delta_x\det(q_r,q_y)+\delta_y\det(q_x,q_r).
\]
This is the two-term work function recurrence before taking valuations.

\begin{theorem}[Maintained algebraic state]\label{thm:state}
For every finite request sequence, the initialization~\eqref{eq:initial-columns} and the determinant-preserving normalization followed by~\eqref{eq:update} produce a $k\times n$ matrix whose $k$-column determinants have valuations exactly equal to the corresponding work function values at every time.
\end{theorem}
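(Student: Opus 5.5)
The proof is an induction on the time $t$, using the invariant~\eqref{eq:lift-invariant} as the induction hypothesis: after processing $t$ requests, the current columns $q_x$, $x\in M$, satisfy $\val(\det(q_{x_1},\ldots,q_{x_k}))=w_t(X)$ for every ordered configuration $X=(x_1,\ldots,x_k)$ of distinct labels. Since exchanging two columns only changes the sign of the determinant, the statement for unordered configurations follows.

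For the base case $t=0$ we invoke~\eqref{eq:initial-lift}. In the expansion~\eqref{eq:matching-det}, each permutation carries a distinct monomial in the $\gamma_{i,x}$, so the least-exponent terms cannot cancel, and the valuation is exactly $D(C_0,X)=w_0(X)$.

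For the inductive step from $t-1$ to $t$, we first have to justify that the normalization can be applied. By the induction hypothesis every $k$-set of distinct columns has determinant of finite valuation $w_{t-1}(X)<\infty$, hence nonzero. Any $X\ni r_t$ therefore shows that $q_{r_t}\neq0$, and Lemma~\ref{lem:normalization} supplies $B$ with $Bq_{r_t}=e_1$ and $\det B=1$. Multiplying by $B$ leaves every $k$-column determinant unchanged, so the invariant still holds for the transformed columns. We then carry out the row replacement~\eqref{eq:update}, and Lemma~\ref{lem:lift-update} gives the invariant for $w_t$.

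The one subtle point is that Lemma~\ref{lem:lift-update} relies on Lemma~\ref{lem:fresh}. This requires that the old determinants contain none of the variables $\xi_{t,x}$, for every $x\in M$. We verify this by a secondary induction. After $t-1$ steps, all entries lie in the subfield generated over $\mathbb Q$ by $z$-monomials, the $\gamma_{i,x}$, and the $\xi_{s,x}$ with $s<t$. This holds at $t=0$, and both operations preserve it: $B$ is built by Gaussian elimination from the current entries, and the $\delta_x$ use only $\xi_{t,x}$. Consequently the variables $\xi_{t,\cdot}$ are still fresh at step $t$.

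Two small cases also need handling. If $k=1$, Lemma~\ref{lem:normalization} does not apply, but we may simply take $B=q_r^{-1}$, a nonzero scalar. This rescales every determinant by a common factor, which is harmless if we rescale consistently; alternatively, we can observe that the $k=1$ setting is handled directly by replacing the single row with $(\delta_x)$. Since the paper assumes $k\ge2$, we treat $k=1$ as trivial or excluded. If $r_t$ coincides with an initial label, no change is needed, because $M$ consists of distinct labels.

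I expect the main obstacle to be the freshness bookkeeping, not the algebra. Everything else is a direct combination of Lemmas~\ref{lem:normalization} and~\ref{lem:lift-update}.
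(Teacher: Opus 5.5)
Your proof is correct and follows the paper's argument. The base case comes from the matching expansion \eqref{eq:initial-lift}, and preservation comes from the determinant-preserving normalization plus Lemma~\ref{lem:lift-update}. Freshness of the $\xi_{t,\cdot}$ holds by induction, and your explicit subfield bookkeeping makes precise what the paper only sketches.

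The one inaccuracy is your $k=1$ aside, which does not affect the argument because the construction assumes $k\ge2$. In that aside, the scalar normalization shifts every valuation by $-w_{t-1}(\{r_t\})$, so that factor would have to be carried along. Replacing the row by $(\delta_x)$ directly would give valuations $d(r_t,x)$ rather than $w_{t-1}(\{r_t\})+d(r_t,x)$.
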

\begin{proof}
The initialization is~\eqref{eq:initial-lift}; preservation is Lemma~\ref{lem:lift-update}. All coordinate operations before time $t$ use only coefficient variables from earlier times, so the required freshness holds inductively. Although individual matrix entries may become rational functions, the determinants satisfy the exact recurrence~\eqref{eq:lift-recurrence}; in particular, their dependence on the coefficient variables is controlled by this recurrence.
\end{proof}

\section{Quadratic products and their determinants}\label{sec:quadratic}
We next explain the space in which the potential lives. This construction uses ordinary polynomials in coordinate symbols; it is separate from the formal cost variable $z$.

\subsection{The symmetric square in coordinates}
Identify a vector $a=(a_1,\ldots,a_k)\in\K^k$ with the linear polynomial $a_1e_1+\cdots+a_ke_k$, where $e_1,\ldots,e_k$ are commuting formal symbols. For vectors $a,b$, their product $ab$ is the usual product of these polynomials. It is a homogeneous polynomial of degree two. Its coordinate at $e_i^2$ is $a_ib_i$, and its coordinate at $e_ie_j$, for $i<j$, is $a_ib_j+a_jb_i$.

The vector space of these quadratic polynomials is denoted $\Sym^2\K^k$. Its dimension is $N=k(k+1)/2$, and its monomial basis, in lexicographic order, is
\begin{equation}\label{eq:monomial-basis}
(e_ie_j)_{1\le i\le j\le k}.
\end{equation}
The notation names this concrete vector space; no tensor-product construction is required. For $k=2$, the product is simply the three-dimensional column
\[
ab=\begin{pmatrix}a_1b_1\\a_1b_2+a_2b_1\\a_2b_2\end{pmatrix}.
\]
For $k=3$, the six coordinate symbols are $e_1^2,e_1e_2,e_1e_3,e_2^2,e_2e_3,e_3^2$.

An invertible matrix $A$ acts on linear polynomials by sending each basis vector $e_i$ to $Ae_i$. Substituting these linear polynomials into a quadratic polynomial gives an invertible linear map on the quadratic space, denoted $\Sym^2 A$. By construction,
\[
(\Sym^2 A)(ab)=(Aa)(Ab).
\]
Consequently, if $a_1,\ldots,a_k$ are a basis of $\K^k$, their $N$ products $a_ia_j$, $i\le j$, are a basis of the quadratic space.

The following lemma is a well-known result that goes back to~\cite{schlaefli1851}, as it is reported in~\cite[pp. 52-53]{muir1960}. For completeness, we add its proof in Appendix~\ref{apdx:sym2proof}.
\begin{lemma}[Determinant of the symmetric square]\label{lem:symdet}
For every invertible $k\times k$ matrix $A$,
\begin{equation}\label{eq:symdet}
\det(\Sym^2 A)=(\det A)^{k+1}.
\end{equation}
\end{lemma}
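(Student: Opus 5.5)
We prove $\det(\Sym^2 A)=(\det A)^{k+1}$ by reducing to elementary matrices through multiplicativity. The first step is to check that $A\mapsto\Sym^2A$ is multiplicative, namely $\Sym^2(AB)=\Sym^2A\cdot\Sym^2B$. This holds because both sides send every product $ab$ to $(ABa)(ABb)$, and such products span the quadratic space; for instance, the monomials $e_ie_j$ are themselves such products. It follows that $A\mapsto\det(\Sym^2A)$ and $A\mapsto(\det A)^{k+1}$ are both multiplicative maps on invertible matrices over the field $\K$. Every invertible matrix is a product of elementary matrices: transvections $I+cE_{pq}$ with $p\ne q$, and diagonal matrices. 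It therefore suffices to verify \eqref{eq:symdet} on these two generating families.

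\emph{Diagonal case.} Let $A=\mathrm{diag}(\lambda_1,\ldots,\lambda_k)$. Then $\Sym^2A$ is diagonal in the monomial basis \eqref{eq:monomial-basis}, with $e_ie_j\mapsto\lambda_i\lambda_je_ie_j$. Hence $\det\Sym^2A=\prod_{i\le j}\lambda_i\lambda_j$. Fix an index $m$ and count how often $\lambda_m$ occurs in this product. It occurs once in each pair $(i,j)$ with exactly one index equal to $m$, giving $k-1$ factors, and twice in the pair $(m,m)$. The total exponent is $k+1$, so $\det\Sym^2A=(\prod_m\lambda_m)^{k+1}=(\det A)^{k+1}$.

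\emph{Transvection case.} Let $A=I+cE_{pq}$ with $p\ne q$, so $A$ fixes every $e_i$ except $e_q\mapsto e_q+ce_p$. Then $\Sym^2A$ sends each monomial $e_ie_j$ to $e_ie_j$ plus a combination of monomials containing strictly fewer factors $e_q$. Order the monomials by their degree in $e_q$. In this ordering $\Sym^2A$ is unitriangular, so its determinant is $1=(\det A)^{k+1}$.

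The only step needing care is the multiplicativity of $\Sym^2$ and the claim that it is well defined as a linear map. Both follow from the defining substitution: substituting $e_i\mapsto Be_i$ and then $e_i\mapsto Ae_i$ is the same as substituting $e_i\mapsto ABe_i$. The claim that every invertible matrix factors into elementary ones is Gaussian elimination, as in Lemma~\ref{lem:normalization}. An alternative route is to extend to an algebraic closure and triangularize $A$; then $\Sym^2A$ is triangular in a suitable ordering with diagonal entries $\lambda_i\lambda_j$, and the same count applies. We would use the elementary-matrix route, since it avoids field extensions.
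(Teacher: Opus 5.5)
Your proof is correct and is essentially the paper's argument: use multiplicativity of $A\mapsto\Sym^2 A$ to reduce to elementary matrices, then handle diagonal matrices by counting that each $\lambda_m$ appears with exponent $k+1$, and handle transvections by unitriangularity when monomials are ordered by their degree in the moved basis vector. The only difference is cosmetic: the paper also checks swaps separately, obtaining $(-1)^{k-1}=(-1)^{k+1}$, whereas your generating set of transvections and diagonal matrices makes that case unnecessary.
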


The exponent $k+1$ has a simple counting interpretation: a given basis vector occurs twice in its square and once in each of its $k-1$ mixed products. This is why the quadratic space is suited to the $(k+1)$ coefficient in the extended-cost bound.

\subsection{Minimum valuations of maximal minors}\label{sec:minor-values}
Let $G=[g_1\ \cdots\ g_m]$ be a matrix whose columns span $\K^d$. Define
\begin{equation}\label{eq:mu}
\mu(G)=\min_{1\le i_1<\cdots<i_d\le m}
\val(\det(g_{i_1},\ldots,g_{i_d})).
\end{equation}
A dependent choice contributes $+\infty$. Because the family spans, at least one choice is independent. Column order affects only the determinant's sign, which has valuation zero.

\Needspace{12\baselineskip}
\begin{lemma}[Rules for minor valuations]\label{lem:minor-rules}
The following facts hold for spanning column families.
\begin{enumerate}
\item If $A$ is an invertible $d\times d$ matrix, then $\mu(AG)=\val(\det A)+\mu(G)$.
\item If every column of a spanning family $F$ is a linear combination of columns of $G$ with coefficients of nonnegative valuation, then $\mu(F)\ge\mu(G)$.
\item If $G$ is block diagonal
with blocks $G_1,\ldots,G_s$, then $\mu(G)=\sum_{j=1}^s\mu(G_j)$.
\end{enumerate}
\end{lemma}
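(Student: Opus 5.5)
I will prove the three parts separately, each directly from the definition \eqref{eq:mu} and the valuation rules \eqref{eq:valuation}. The main tools are multiplicativity of determinants and the Cauchy--Binet formula.

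For part 1, fix an index set $I=\{i_1<\cdots<i_d\}$ and write $G_I$ for the corresponding $d\times d$ submatrix. Then $(AG)_I=AG_I$, so $\det((AG)_I)=\det A\,\det G_I$. Taking valuations gives $\val(\det (AG)_I)=\val(\det A)+\val(\det G_I)$, with the convention $\infty+c=\infty$ for dependent choices; this is consistent because $A$ is invertible. Minimizing over $I$ yields the claim. Spanning is preserved because $A$ is invertible.

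For part 2, write $F=GC$, where $C$ is an $m\times p$ coefficient matrix all of whose entries have valuation $\ge0$. For an index set $J$ of size $d$, Cauchy--Binet gives
\[
\det F_J=\sum_{|I|=d}\det G_I\,\det C_{I,J}.
\]
Each $\det C_{I,J}$ is a signed sum of products of entries of valuation $\ge0$, so \eqref{eq:valuation} gives $\val(\det C_{I,J})\ge0$. Applying \eqref{eq:valuation} once more, $\val(\det F_J)\ge\min_I\val(\det G_I)\ge\mu(G)$. Minimizing over $J$ gives $\mu(F)\ge\mu(G)$. The spanning hypothesis on $F$ only ensures that $\mu(F)$ is finite and well defined.

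For part 3, the block diagonal structure means the rows split into groups of sizes $d_1,\dots,d_s$ and the columns into groups of sizes $m_1,\dots,m_s$, with $G_j$ occupying the $j$-th row and column group. Each $G_j$ spans $\K^{d_j}$, since otherwise $G$ could not span. Take a choice $I$ of $d$ columns and let $I_j$ be its part in group $j$. If $|I_j|\ne d_j$ for some $j$, then some block receives more than $d_j$ columns. Those columns are supported on only $d_j$ rows, so they are linearly dependent and $\det G_I=0$. Otherwise $G_I$ is block diagonal up to a column permutation, and $\det G_I=\pm\prod_j\det (G_j)_{I_j}$. Valuations then add, and minimizing independently over each $I_j$ gives $\mu(G)=\sum_j\mu(G_j)$. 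The minimum is attained because each block admits an independent choice.

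The only step needing care is part 2. I need $\val(\det C_{I,J})\ge0$, and I need to keep track that choices with $\det G_I=0$ contribute $\infty$ and are therefore harmless in the minimum. The other two parts are bookkeeping, including the sign conventions, which do not affect valuations.
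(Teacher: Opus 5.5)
Your proof is correct and follows the paper's argument: part 1 is the same factorization $\det(AG_I)=\det A\,\det G_I$, your Cauchy--Binet expansion in part 2 is the paper's multilinear column expansion with the summands grouped by column sets, and part 3 is the same block-counting argument. Your observation in part 3 that some block must receive more than $d_j$ columns is a slightly tidier way of saying what the paper splits into the ``more'' and ``fewer'' cases.
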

\begin{proof}
For the first statement, every selected determinant is multiplied by $\det A$.
For the second, expand a determinant of $d$ columns of $F$ multilinearly in columns of $G$. Each summand is a determinant of $d$ columns of $G$, multiplied by $d$ coefficients of total valuation at least zero. Every summand therefore has valuation at least $\mu(G)$, and so does their sum by~\eqref{eq:valuation}.

For the third statement, let $d_j$ be the number of rows of $G_j$. A nonzero full determinant must choose exactly $d_j$ columns from each block: choosing more makes those columns dependent, and choosing fewer leaves that block unspanned. The determinant is then the product of the block determinants, and its valuation is the sum of their valuations. Choosing a minimum determinant independently in every block attains the claimed sum.
\end{proof}

We will also use block upper triangular matrices. For a matrix $\left(\begin{smallmatrix}U&*\\0&W\end{smallmatrix}\right)$ with square diagonal blocks, its determinant is $\det U\det W$, so the determinant valuation is the sum of the two block valuations.

\section{The potential and its increase at a request}\label{sec:step}
Fix any total order on the finite label set $M$. At each time define the $N\times\binom{n+1}{2}$ matrix $\qs$, where $N=\binom{k+1}{2}$, and potential $\Psi$ by
\begin{equation}\label{eq:Q-definition}
\qs=[\,z^{-d(x,y)}q_xq_y\,]_{x\le y},\qquad
\Psi=\mu(\qs).
\end{equation}
The columns are written in the monomial coordinates~\eqref{eq:monomial-basis}. This family spans: choose any configuration of $k$ distinct labels, whose columns form a basis by Theorem~\ref{thm:state}; their pairwise products form a basis of the quadratic space. Multiplying those products by nonzero distance weights preserves their independence.

Before processing a request, we apply the matrix $B$ from Lemma~\ref{lem:normalization} to every column. This replaces $\qs$ by $(\Sym^2 B)\qs$. Lemmas~\ref{lem:symdet} and~\ref{lem:minor-rules} show that $\Psi$ is unchanged, since $\det B=1$. Therefore we may analyze the request in coordinates where $q_r=e_1$.

\subsection{Three quantities associated with the request}
Write $q_x=(a_x,\widehat q_x)$ and let $\pi:\K^k\to\K^{k-1}$ delete the first coordinate. Define the weighted columns and their projection by
\begin{equation}\label{eq:H}
h_x=z^{-d(r,x)}q_x,\qquad H=[h_x]_{x\in M},\qquad
\widehat H=[\pi h_x]_{x\in M}.
\end{equation}
Let $\widehat{\qs}$ be the bottom coordinate block of $\qs$, consisting of the rows $e_ie_j$ with $2\le i\le j\le k$:
\begin{equation}\label{eq:A}
\widehat{\qs}=[\,z^{-d(x,y)}\widehat q_x\widehat q_y\,]_{x\le y}.
\end{equation}
Consider the three quantities $\mu(H), \mu(\widehat H)$ and $\mu(\widehat{\qs})$. The determinants defining these quantities have sizes $k$, $k-1$, and $\binom{k}{2}=N-k$, respectively. Each family spans its coordinate space: $H$ spans $\K^k$, its projection spans $\K^{k-1}$, and products of a projected basis span the remaining quadratic coordinates. All three quantities are therefore finite.

We will prove
\begin{equation}\label{eq:three-comparisons}
\Psi\le \mu(\widehat{\qs})+\mu(H),\qquad
\Psi'\ge \mu(\widehat{\qs})+\mu(\widehat H),\qquad
w'(X)-w(X)\le\mu(\widehat H)-\mu(H)\quad\text{for all }X.
\end{equation}
The first inequality selects one useful minor of the old matrix $\qs$. The second bounds every minor of the new matrix $\qs'$. The third relates the extended cost to the difference between these two potential bounds.

\subsection{A useful minor of the old quadratic matrix}
\begin{lemma}\label{lem:old-potential}
The old potential before the request satisfies $\Psi\le \mu(\widehat{\qs})+\mu(H)$.
\end{lemma}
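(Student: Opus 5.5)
The plan is to exhibit one explicit $N\times N$ minor of $\qs$ whose valuation equals $\mu(\widehat{\qs})+\mu(H)$. Since $\Psi$ is a minimum over all maximal minors, this gives the inequality. Throughout I work in the normalized coordinates where $q_r=e_1$, which is legitimate because the normalization does not change $\Psi$.

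\textbf{Step 1 (the columns through $r$).} Consider the columns of $\qs$ indexed by pairs $\{r,x\}$ for $x\in M$, including $x=r$. Using $q_r=e_1$, each such column is
\[
z^{-d(r,x)}q_rq_x=z^{-d(r,x)}e_1q_x=z^{-d(r,x)}\bigl(a_xe_1^2+\textstyle\sum_{j\ge2}(q_x)_je_1e_j\bigr).
\]
So it is supported only on the $k$ rows $e_1e_j$, $j=1,\dots,k$. In those rows its coordinates are exactly the vector $h_x=z^{-d(r,x)}q_x$. In particular the column $\{r,r\}$ is $e_1^2=h_r$. Its coordinates in the bottom block of rows $e_ie_j$, $2\le i\le j$, vanish.

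\textbf{Step 2 (choosing two families of columns).} Pick labels $x_1,\dots,x_k$ attaining $\mu(H)$, so that $\val\det(h_{x_1},\dots,h_{x_k})=\mu(H)$; these labels are necessarily distinct. Pick pairs $\{y_l,y'_l\}$, $l=1,\dots,N-k$, attaining $\mu(\widehat{\qs})$.

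I need these $N$ pairs to be distinct columns of $\qs$. Any pair containing $r$ has bottom part $z^{-d(r,y)}\widehat q_r\widehat q_y=0$, since $\widehat q_r=0$. Such a column would make the minor of $\widehat{\qs}$ vanish, so none of the chosen pairs $\{y_l,y'_l\}$ contains $r$. Hence they are disjoint from the pairs $\{r,x_i\}$.

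\textbf{Step 3 (block triangular structure).} Order the rows with the $k$ rows $e_1e_j$ first and the $N-k$ bottom rows after. Order the selected columns with the $\{r,x_i\}$ columns first. The resulting $N\times N$ matrix has the form
\[
\begin{pmatrix}U&*\\0&W\end{pmatrix},
\]
where $U=[h_{x_1}\cdots h_{x_k}]$ and $W$ is the chosen minor of $\widehat{\qs}$. The block $W$ arises because the bottom coordinates of $z^{-d(y,y')}q_yq_{y'}$ are exactly $z^{-d(y,y')}\widehat q_y\widehat q_{y'}$: the product $\widehat q_y\widehat q_{y'}$ involves no $e_1$ terms.

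By the block upper triangular remark, the determinant is $\det U\det W$. Row and column reorderings only change the sign, which does not affect the valuation. So its valuation is $\mu(H)+\mu(\widehat{\qs})$, and $\Psi\le\mu(\widehat{\qs})+\mu(H)$ follows.

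\textbf{Main obstacle.} The only delicate point is the bookkeeping in Step 2: checking that the two selected families do not overlap as columns of $\qs$, and that the zero block below the $\{r,x\}$ columns is exact. Both follow from $q_r=e_1$.
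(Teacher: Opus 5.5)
Your proposal is correct and follows the paper's proof essentially step for step. Both arguments take the $k$ columns $z^{-d(r,x)}q_rq_x=e_1h_x$ that attain $\mu(H)$, which are supported only on the rows $e_1e_j$, add $N-k$ columns that attain $\mu(\widehat{\qs})$ (these cannot involve $r$, because such columns have zero bottom block), and read off the valuation from the resulting block upper triangular determinant; your extra checks on the distinctness of the $x_i$, the $(r,r)$ column, and the exact form $z^{-d(y,y')}\widehat q_y\widehat q_{y'}$ of the bottom coordinates are accurate and only make explicit what the paper leaves implicit.
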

\begin{proof}
The first $k$ quadratic coordinates are
\[
e_1^2,e_1e_2,\ldots,e_1e_k.
\]
Multiplying a vector $h\in\K^k$ by $e_1$ puts its $k$ coordinates into these rows and puts zeros into every remaining quadratic row. Because $q_r=e_1$, the columns
\[
e_1h_x=z^{-d(r,x)}q_rq_x
\]
are actual columns of $\qs$, namely its pairs $(r,x)$.

Choose $k$ columns of $H$ attaining $\mu(H)$ and take the corresponding $k$ columns $e_1h_x$ of $\qs$. Their first $k$ rows give the selected $H$ minor and their bottom rows are zero. Next choose $N-k$ columns of $\qs$ whose bottom blocks attain $\mu(\widehat{\qs})$ in $\widehat{\qs}$. They do not include any column of the first selected columns, since those have zero bottom blocks and could not occur in a nonzero bottom-block determinant.

Together these $N$ columns form a square matrix of the form
\[
\begin{pmatrix}H_*&*\\0&\widehat Q_*\end{pmatrix},
\]
where $\val(\det H_*)=\mu(H)$ and $\val(\det\widehat Q_*)=\mu(\widehat{\qs})$. The determinant has valuation $\mu(H)+\mu(\widehat{\qs})$. Since $\Psi$ is the minimum over all maximal minors of $\qs$, it is at most this value.
\end{proof}

\subsection{A bound on every minor of the new quadratic matrix}
\begin{lemma}\label{lem:new-potential}
The new potential after the request satisfies $\Psi'\ge \mu(\widehat{\qs})+\mu(\widehat H)$.
\end{lemma}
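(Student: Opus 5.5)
The plan is to write down the new quadratic matrix $\qs'$ explicitly in the normalized coordinates where $q_r=e_1$. Then expand each of its maximal minors by a generalized Laplace expansion along the bottom block of rows $e_ie_j$, $2\le i\le j\le k$. The bottom factors are controlled by $\mu(\widehat{\qs})$ and the top factors by $\mu(\widehat H)$.

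\textbf{Step 1: the new columns.} By~\eqref{eq:update}, $q'_x=(\delta_x,\widehat q_x)$, so as quadratic polynomials
\[
z^{-d(x,y)}q'_xq'_y
= z^{-d(x,y)}\delta_x\delta_y\,e_1^2
+ e_1\cdot z^{-d(x,y)}\bigl(\delta_x\widehat q_y+\delta_y\widehat q_x\bigr)
+ z^{-d(x,y)}\widehat q_x\widehat q_y .
\]
The bottom block of $\qs'$ is therefore exactly $\widehat{\qs}$ from~\eqref{eq:A}. The change of basis by $B$ was already absorbed, and the row replacement does not touch the bottom $k-1$ coordinates. The top $k$ rows ($e_1^2,e_1e_2,\dots,e_1e_k$) of column $(x,y)$ form the vector
\[
\Bigl(z^{-d(x,y)}\delta_x\delta_y,\;\; c_{xy}\,\pi h_y + c_{yx}\,\pi h_x\Bigr),
\qquad c_{xy}:=z^{-d(x,y)}\delta_x z^{d(r,y)} .
\]
Here we use $\widehat q_y=z^{d(r,y)}\pi h_y$ from~\eqref{eq:H}. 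Since $\val(\delta_x)=d(r,x)$, the triangle inequality gives
\[
\val(c_{xy})=d(r,x)+d(r,y)-d(x,y)\ge 0 ,
\]
and likewise for the $e_1^2$ entry. This includes the cases $x=y$ and $x=r$, where $\delta_r=1$. So every top-block column is a combination, with coefficients of nonnegative valuation, of the columns of the block-diagonal family
\[
G=\begin{pmatrix}1 & 0\\ 0 & \widehat H\end{pmatrix}.
\]
Lemma~\ref{lem:minor-rules}(3) gives $\mu(G)=0+\mu(\widehat H)$.

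\textbf{Step 2: Laplace expansion.} Fix any $N$ columns of $\qs'$. Expand their determinant along the last $N-k$ rows. It is a signed sum over splittings of the chosen columns into $N-k$ columns, used in the bottom block, and the complementary $k$ columns, used in the top $k$ rows. Each bottom factor is an $(N-k)$-minor of $\widehat{\qs}$, so its valuation is at least $\mu(\widehat{\qs})$. Each top factor is a $k\times k$ determinant of columns that are nonnegative-valuation combinations of columns of $G$. Expanding it multilinearly, exactly as in the proof of Lemma~\ref{lem:minor-rules}(2), shows its valuation is at least $\mu(G)=\mu(\widehat H)$; a zero determinant is harmless. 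Then~\eqref{eq:valuation} bounds the whole sum by $\mu(\widehat{\qs})+\mu(\widehat H)$, and minimizing over the choice of columns gives $\Psi'\ge\mu(\widehat{\qs})+\mu(\widehat H)$.

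\textbf{Main obstacle.} The only delicate point is the bookkeeping in Step~1. One must check that the mixed coordinates really factor through $\pi h_x,\pi h_y$ with the weight $z^{-d(r,\cdot)}$ moved onto the correct vector. One must also check that all coefficients, including the $e_1^2$ entry and the diagonal pairs $x=y$ (coefficient $2\delta_x$), have nonnegative valuation. This is precisely where the triangle inequality $d(x,y)\le d(r,x)+d(r,y)$ enters. Everything else is the generic Laplace/multilinearity argument already used for Lemma~\ref{lem:minor-rules}.
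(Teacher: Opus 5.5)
Your proof is correct and follows the paper's argument. It uses the same three-block expansion of $z^{-d(x,y)}q'_xq'_y$, the same rewriting $\widehat q_y=z^{d(r,y)}\pi h_y$ with coefficients of valuation $d(r,x)+d(r,y)-d(x,y)\ge 0$, and the same block rule giving $0+\mu(\widehat H)$. The only cosmetic difference is that the paper puts the bottom columns $z^{-d(x,y)}\widehat q_x\widehat q_y$ into one three-block comparison family and applies the second and third rules of Lemma~\ref{lem:minor-rules} once, whereas you split off the bottom block with a generalized Laplace expansion; both are valid.
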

\begin{proof}
View $\widehat q_x$ as a linear polynomial in $e_2,\ldots,e_k$. The update is $q'_x=\delta_xe_1+\widehat q_x$. Expanding one new quadratic column gives
\begin{align}
z^{-d(x,y)}q'_xq'_y
={}&z^{-d(x,y)}\delta_x\delta_y e_1^2\notag\\
&+e_1z^{-d(x,y)}(\delta_x\widehat q_y+\delta_y\widehat q_x)\notag\\
&+z^{-d(x,y)}\widehat q_x\widehat q_y.\label{eq:quadratic-expansion}
\end{align}
The expressions in these three lines occupy disjoint coordinate blocks: the single row $e_1^2$; the $k-1$ rows $e_1e_j$, $j\ge2$; and the $N-k$ rows not involving $e_1$.

Compare the new columns to the following family, supported separately on these three blocks:
\begin{equation}\label{eq:block-family}
\{e_1^2\}\;\cup\;
\{e_1\pi h_x:x\in M\}\;\cup\;
\{z^{-d(x,y)}\widehat q_x\widehat q_y:x\le y\}.
\end{equation}
Its minimum full-minor valuation is $0+\mu(\widehat H)+\mu(\widehat{\qs})$, by the block rule of Lemma~\ref{lem:minor-rules}. We show that every column in~\eqref{eq:quadratic-expansion} is a linear combination of this family with coefficient valuations at least zero.

The coefficient of $e_1^2$ has valuation
\begin{equation}\label{eq:triangle-defect}
\val(z^{-d(x,y)}\delta_x\delta_y)
=d(r,x)+d(r,y)-d(x,y)\ge0.
\end{equation}
For a mixed term, use $\pi h_y=z^{-d(r,y)}\widehat q_y$ to write
\[
e_1z^{-d(x,y)}\delta_x\widehat q_y
=\bigl(z^{d(r,y)-d(x,y)}\delta_x\bigr)e_1\pi h_y.
\]
The coefficient in parentheses has non-negative valuation. The other mixed term in~\eqref{eq:quadratic-expansion} has the analogous expression with $x$ and $y$ interchanged. Finally, the last line of~\eqref{eq:quadratic-expansion} is already a column of the third family, with coefficient one.

The second rule of Lemma~\ref{lem:minor-rules} now shows that every maximal minor of $\qs'$ has valuation at least $\mu(\widehat{\qs})+\mu(\widehat H)$. Taking their minimum proves the assertion.
\end{proof}

Combining the preceding two lemmas gives
\begin{equation}\label{eq:potential-increment}
\Psi'-\Psi\ge\mu(\widehat H)-\mu(H).
\end{equation}

\subsection{Why the potential change pays the extended cost}
\begin{lemma}\label{lem:extended-cost}
For every configuration $X$, $w'(X)-w(X)\le\mu(\widehat H)-\mu(H)$.
\end{lemma}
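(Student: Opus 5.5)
The plan is to compute the two quantities $\mu(H)$ and $\mu(\widehat H)$ explicitly in terms of the old work function $w$, and then reduce the claim to an exchange inequality for $w$ that comes from a linear-algebra identity among the determinants $u$.

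\emph{Step 1: evaluate $\mu(H)$ and $\mu(\widehat H)$.} For $k$ distinct labels $Y=(y_1,\ldots,y_k)$ we have $\det(h_{y_1},\ldots,h_{y_k})=z^{-\sum_i d(r,y_i)}u(Y)$. By the invariant of Theorem~\ref{thm:state},
\[
\mu(H)=\min_Y\Bigl(w(Y)-\sum_{y\in Y}d(r,y)\Bigr).
\]
For $\widehat H$, take $k-1$ distinct labels $Z=(z_2,\ldots,z_k)$.
\begin{itemize}
\item If $r\in Z$, then $\pi h_r=0$, so the minor vanishes.
\item If $r\notin Z$, then $q_r=e_1$, and expanding along that column gives $\det(\pi q_{z_2},\ldots,\pi q_{z_k})=u(r,z_2,\ldots,z_k)$.
\end{itemize}
Using $d(r,r)=0$, this yields
\[
\mu(\widehat H)=\min_{Y\ni r}\Bigl(w(Y)-\sum_{y\in Y}d(r,y)\Bigr).
\]
This is the same minimum as for $\mu(H)$, but restricted to configurations containing $r$.

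\emph{Step 2: the case $r\in X$.} By~\eqref{eq:recurrence}, $w'(X)-w(X)=0$. Since $\mu(\widehat H)$ is a minimum over a subfamily of the configurations used for $\mu(H)$, we get $\mu(\widehat H)\ge\mu(H)$, which is exactly the claim.

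\emph{Step 3: the case $r\notin X$, which is the main step.} Fix $Y\ni r$ attaining $\mu(\widehat H)$. We need some configuration $Y'$ with
\[
w(Y')-\sum_{y\in Y'}d(r,y)\le w(X)+w(Y)-\sum_{y\in Y}d(r,y)-w'(X).
\]

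First we establish an exchange inequality. Order $Y=(r,y_2,\ldots,y_k)$ and $X=(x_1,\ldots,x_k)$. The columns $q_{x_1},\ldots,q_{x_k}$ form a basis, so by Cramer's rule
\[
q_r=\sum_i c_iq_{x_i},\qquad c_i=\frac{u(x_1,\ldots,x_{i-1},r,x_{i+1},\ldots,x_k)}{u(X)}.
\]
Substituting this into the first column of $u(Y)$ and using multilinearity gives the Grassmann--Plücker identity
\[
u(X)\,u(Y)=\sum_{i}u(X-x_i+r)\,u(Y-r+x_i).
\]
Every term with $x_i\in Y$ vanishes because of a repeated column. Applying the ultrametric rule~\eqref{eq:valuation} then yields some $x\in X\setminus Y$ with
\[
w(X)+w(Y)\ge w(X-x+r)+w(Y-r+x).
\]
Both configurations on the right consist of distinct labels, since $x\notin Y$ and $r\notin X$.

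Now set $Y'=Y-r+x$. Then $\sum_{y\in Y'}d(r,y)=\sum_{y\in Y}d(r,y)+d(r,x)$. By~\eqref{eq:recurrence}, $w'(X)\le w(X-x+r)+d(r,x)$. Combining these two facts with the exchange inequality gives
\[
\mu(H)\le w(Y')-\sum_{y\in Y'}d(r,y)\le w(X)+\mu(\widehat H)-w'(X),
\]
which is the claim.

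The main obstacle is Step~3, specifically guaranteeing that the exchanged point $x$ lies outside $Y$, so that $Y'$ is a genuine configuration. Deriving the exchange from the determinant identity, rather than citing quasiconvexity abstractly, handles this automatically, because the terms with $x_i\in Y$ vanish identically.
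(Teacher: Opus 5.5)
Your proof is correct. It rests on the same algebraic identity as the paper's proof: substitute the Cramer expansion $e_1=q_r=\sum_i c_iq_{x_i}$ into a determinant whose first column is $q_r$. Your Pl\"ucker relation is exactly the paper's display~\eqref{eq:cramer-minor} after multiplying by $u(X)$ and clearing the $z$-weights. What differs is how the identity is used.

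The paper never evaluates $\mu(H)$ or $\mu(\widehat H)$. It takes an arbitrary $(k-1)$-minor of $\widehat H$, writes it as $\sum_i c_iz^{d(r,x_i)}\det(h_{x_i},h_{y_1},\ldots,h_{y_{k-1}})$, and bounds every summand below by $m+\mu(H)$ at once. So no minimizer has to be chosen, and no exchanged configuration has to be checked for distinct labels. The case $r\in X$ is just the first equality of that display.

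You instead compute closed forms: $\mu(H)=\min_Y\bigl(w(Y)-\sum_{y\in Y}d(r,y)\bigr)$, and $\mu(\widehat H)$ is the same minimum restricted to $Y\ni r$. You then read the identity in the other direction to extract the quasiconvexity exchange $w(X)+w(Y)\ge w(X-x+r)+w(Y-r+x)$ with $x\in X\setminus Y$, and conclude with an explicit witness $Y'$.

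The paper's route is shorter and matches the minor-comparison style of Lemmas~\ref{lem:old-potential} and~\ref{lem:new-potential}. Yours buys interpretation: it shows this step needs nothing from the lift beyond quasiconvexity of $w$ and the recurrence, so it is essentially the Koutsoupias--Papadimitriou duality lemma. Your closed forms also make it easy to see that the bound is tight. Indeed $\min_Y\bigl(w'(Y)-\sum_{y\in Y}d(r,y)\bigr)=\mu(\widehat H)$, so any $X$ minimizing $w(X)-\sum_{x\in X}d(r,x)$ satisfies $w'(X)-w(X)\ge\mu(\widehat H)-\mu(H)$.
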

\begin{proof}
First suppose $r\notin X$. The old columns $q_{x_1},\ldots,q_{x_k}$ are a basis, so write the request column in this basis:
\begin{equation}\label{eq:cramer-expansion}
e_1=q_r=\sum_{i=1}^k c_iq_{x_i}.
\end{equation}
Replacing the $i$th column of this basis by $e_1$ multiplies its determinant by $c_i$: substitute~\eqref{eq:cramer-expansion} and use multilinearity, observing that every summand except $i$ has a repeated column and vanishes. This is Cramer's rule, and it gives
\[
c_i=\frac{u(X-x_i+r)}{u(X)},\qquad
\val(c_i)=w(X-x_i+r)-w(X).
\]
By the work function recurrence, the increase at $X$ is therefore
\begin{equation}\label{eq:m}
m:=w'(X)-w(X)=\min_i\bigl\{\val(c_i)+d(r,x_i)\bigr\}.
\end{equation}

Consider any $k-1$ columns $\pi h_{y_1},\ldots,\pi h_{y_{k-1}}$ of $\widehat H$. We have
\begin{align}
\det(\pi h_{y_1},\ldots,\pi h_{y_{k-1}})
&=\det(e_1,h_{y_1},\ldots,h_{y_{k-1}})\notag\\
&=\sum_{i=1}^k c_i z^{d(r,x_i)}
\det(h_{x_i},h_{y_1},\ldots,h_{y_{k-1}}).\label{eq:cramer-minor}
\end{align}
The first equality is expansion along the first column; the second substitutes~\eqref{eq:cramer-expansion} and uses $q_{x_i}=z^{d(r,x_i)}h_{x_i}$.

Every determinant in the final line is either zero or a maximal minor of $H$, up to a column ordering. Its valuation is at least $\mu(H)$. Equation~\eqref{eq:m} bounds the valuation of each coefficient $c_i z^{d(r,x_i)}$ below by $m$. Hence every summand has valuation at least $\mu(H)+m$, and so does their sum. Minimizing the left side over the projected columns proves $\mu(\widehat H)\ge\mu(H)+m$.

For the case $r\in X$, the work function increase is zero. Independently of the choice of $X$, we have $\mu(\widehat H)\ge\mu(H)$: since $h_r=e_1$, each determinant defining $\mu(\widehat H)$ is, by the first equality of~\eqref{eq:cramer-minor}, already a determinant of $k$ columns of $H$. Its valuation is therefore at least $\mu(H)$. This completes the remaining case.
\end{proof}

\begin{corollary}[The potential pays every request]\label{cor:step}
For each time $t$,
\[
\extcost_t\le\Psi_t-\Psi_{t-1}.
\]
\end{corollary}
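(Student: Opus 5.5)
The plan is to assemble the three comparisons in \eqref{eq:three-comparisons}, all of which are now available, for the request $r=r_t$. I fix time $t$ and let $\Psi=\Psi_{t-1}$ and $w=w_{t-1}$ denote the quantities before the request, and $\Psi'=\Psi_t$, $w'=w_t$ those after it.

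The first step is to pass to the normalized coordinates. Before processing $r$, we apply the matrix $B$ of Lemma~\ref{lem:normalization} to every column. This replaces $\qs$ by $(\Sym^2 B)\qs$, and by Lemma~\ref{lem:symdet} we have $\det(\Sym^2B)=(\det B)^{k+1}=1$. The first rule of Lemma~\ref{lem:minor-rules} then shows that $\Psi_{t-1}$ is unchanged. The work function values are also unchanged, since every $k$-column determinant is preserved. So we may assume $q_r=e_1$ without altering $\Psi_{t-1}$ or $w_{t-1}$. By Lemma~\ref{lem:lift-update} and Theorem~\ref{thm:state}, the new columns $q'_x$ from \eqref{eq:update} represent $w_t$, and $\Psi_t=\mu(\qs')$ is computed from them.

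Next I invoke the two potential bounds. Lemma~\ref{lem:old-potential} gives $\Psi_{t-1}\le\mu(\widehat{\qs})+\mu(H)$, and Lemma~\ref{lem:new-potential} gives $\Psi_t\ge\mu(\widehat{\qs})+\mu(\widehat H)$. All three quantities are finite, as noted after \eqref{eq:three-comparisons}. Subtracting therefore yields \eqref{eq:potential-increment}, namely $\Psi_t-\Psi_{t-1}\ge\mu(\widehat H)-\mu(H)$.

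Finally, Lemma~\ref{lem:extended-cost} gives $w_t(X)-w_{t-1}(X)\le\mu(\widehat H)-\mu(H)$ for every configuration $X$. Taking the maximum over $X$ bounds $\extcost_t$ by the same quantity, and chaining this with \eqref{eq:potential-increment} gives the corollary.

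The only point needing care, and the likely obstacle, is the bookkeeping. The right-hand sides of the three lemmas must refer to the same objects $H$, $\widehat H$ and $\widehat{\qs}$, all built from the old normalized columns at time $t$. One must also justify that the normalization step is exactly invariant for both $\Psi$ and $w$, which rests on $\det B=1$ rather than merely having valuation $0$. The case $k=1$ must be handled separately, since Lemma~\ref{lem:normalization} assumes $k\ge2$. There I would either note that the setup of Section~\ref{sec:row-update} is stated only for $k\ge2$, or check the trivial single-server case directly: there $\extcost_t=d(r_{t-1},r_t)$, and the potential analysis reduces to the triangle inequality.
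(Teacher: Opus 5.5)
Your proposal is correct and is exactly the paper's argument. After the determinant-preserving normalization, which leaves $\Psi_{t-1}$ and $w_{t-1}$ unchanged because $\det B=1$, Lemmas~\ref{lem:old-potential} and~\ref{lem:new-potential} give \eqref{eq:potential-increment}, and maximizing Lemma~\ref{lem:extended-cost} over $X$ finishes; the paper states this as immediate from Lemma~\ref{lem:extended-cost} and \eqref{eq:potential-increment}.

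One small slip in your optional $k=1$ aside: there $\extcost_t=2d(r_{t-1},r_t)$, attained at $X=\{r_{t-1}\}$ with $r_0=s_1$, not $d(r_{t-1},r_t)$. The potential increases by exactly $2d(r_{t-1},r_t)$, so the inequality still holds, with equality.
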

\begin{proof}
Immediate from Lemma~\ref{lem:extended-cost} and~\eqref{eq:potential-increment}.
\end{proof}

\section{Initial and final potential values}\label{sec:endpoints}
We now prove bounds on the initial and terminal values of the potential function.

\begin{lemma}[Terminal bound]\label{lem:terminal-bound}
At every time and for every configuration $X$, we have
\begin{equation}\label{eq:terminal-bound}
\Psi\le(k+1)w(X)-\cl(X).
\end{equation}
\end{lemma}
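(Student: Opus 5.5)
Since $\Psi=\mu(\qs)$ is a minimum over maximal minors, we only need one good minor. Fix an ordering $X=(x_1,\ldots,x_k)$ and select the $N$ columns of $\qs$ indexed by the pairs $x_i\le x_j$ of labels in $X$, including the diagonal pairs $(x_i,x_i)$, whose weight is $z^{-d(x_i,x_i)}=1$. This square submatrix is
\[
Q_X=\bigl[z^{-d(x_i,x_j)}q_{x_i}q_{x_j}\bigr]_{i\le j},
\]
so $\Psi\le\val(\det Q_X)$. The plan is to compute this determinant exactly.

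Factor out the scalar weights column by column:
\[
\det Q_X=z^{-\sum_{i<j}d(x_i,x_j)}\det\bigl[q_{x_i}q_{x_j}\bigr]_{i\le j}=z^{-\cl(X)}\det\bigl[q_{x_i}q_{x_j}\bigr]_{i\le j}.
\]
Here the diagonal pairs contribute nothing, and each unordered pair of distinct labels contributes exactly once.

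For the remaining determinant, let $A=[q_{x_1}\cdots q_{x_k}]$. This matrix is invertible by Theorem~\ref{thm:state}, since $\val(\det A)=w(X)<\infty$. By construction, $(\Sym^2A)(e_ie_j)=(Ae_i)(Ae_j)=q_{x_i}q_{x_j}$. Hence, up to reordering columns, which only changes a sign, the matrix $[q_{x_i}q_{x_j}]_{i\le j}$ is exactly $\Sym^2A$ written in the monomial basis~\eqref{eq:monomial-basis}. Lemma~\ref{lem:symdet} then gives the value $\pm(\det A)^{k+1}$.

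Taking valuations and using~\eqref{eq:valuation} together with~\eqref{eq:lift-invariant}, we get
\[
\val(\det Q_X)=(k+1)w(X)-\cl(X),
\]
which proves the bound.

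The only point requiring care is the identification of $Q_X$'s unweighted part with $\Sym^2A$. The mixed columns $q_{x_i}q_{x_j}$ carry the coefficients $a_ib_j+a_jb_i$ without a factor $\tfrac12$, which matches the definition of $\Sym^2A$ as substitution. This identification is precisely the content of Lemma~\ref{lem:symdet} as stated, so no extra scaling arises. One should also note that the $\qs$ columns are indexed by $x\le y$ in the fixed total order on $M$, so each pair from $X$ appears once in the correct orientation. The choice of $X$ is arbitrary, and the argument works at every time because Theorem~\ref{thm:state} holds at every time.
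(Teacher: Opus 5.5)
Your proposal is correct and follows the paper's proof essentially step for step. You select the $N$ columns of $\qs$ indexed by pairs from $X$, factor out the weights to get the $-\cl(X)$ term, recognize the unweighted matrix as $\Sym^2 A$, and apply Lemma~\ref{lem:symdet} to get $(k+1)w(X)$; one small attribution slip is that the identification of the columns $q_{x_i}q_{x_j}$ with $\Sym^2 A$ comes from the defining property $(\Sym^2 A)(ab)=(Aa)(Ab)$, not from Lemma~\ref{lem:symdet}, which only supplies the determinant.
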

\begin{proof}
Let $A$ be the $k\times k$ matrix with columns $q_{x_1},\ldots,q_{x_k}$. Select from $\qs$ the $N$ columns
\[
z^{-d(x_i,x_j)}q_{x_i}q_{x_j},\qquad 1\le i\le j\le k.
\]
Without the scalar weights, their matrix is $\Sym^2 A$, since its columns are the images of the coordinate monomials $e_ie_j$. The determinant is nonzero and, by Lemma~\ref{lem:symdet}, its valuation is $(k+1)\val(\det A)=(k+1)w(X)$. The scalar weights subtract
\[
\sum_{i\le j}d(x_i,x_j)=\sum_{i<j}d(x_i,x_j)=\cl(X)
\]
from the determinant valuation; diagonal distances are zero. This gives one maximal minor of $\qs$ with the claimed valuation, and $\Psi$ is the minimum over all such minors.
\end{proof}

\Needspace{5\baselineskip}
\begin{lemma}[Initial value]\label{lem:initial-bound}
The initialization~\eqref{eq:initial-columns} satisfies $\Psi_0=-\cl(C_0)$.
\end{lemma}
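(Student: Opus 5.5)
The plan is to prove the two inequalities $\Psi_0\le-\cl(C_0)$ and $\Psi_0\ge-\cl(C_0)$ separately. The upper bound follows from Lemma~\ref{lem:terminal-bound}; the lower bound comes from comparing $\qs_0$ with a diagonal family via the second rule of Lemma~\ref{lem:minor-rules}.

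\emph{Upper bound.} Apply Lemma~\ref{lem:terminal-bound} at time $0$ with $X=C_0$. By~\eqref{eq:initial-lift}, $w_0(C_0)=D(C_0,C_0)=0$. This gives $\Psi_0\le(k+1)\cdot 0-\cl(C_0)=-\cl(C_0)$.

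\emph{Lower bound.} Consider the auxiliary family $G=\{z^{-d(s_i,s_j)}e_ie_j\}_{1\le i\le j\le k}$ in the monomial coordinates~\eqref{eq:monomial-basis}. It is a diagonal $N\times N$ matrix, so it spans $\Sym^2\K^k$. By the block rule (with $1\times1$ blocks), or simply because it is diagonal,
\[
\mu(G)=-\sum_{i\le j}d(s_i,s_j)=-\cl(C_0),
\]
since the diagonal distances vanish. I will show that every column $z^{-d(x,y)}q_xq_y$ of $\qs_0$ is a linear combination of columns of $G$ with coefficients of nonnegative valuation. The second rule of Lemma~\ref{lem:minor-rules} then yields $\Psi_0=\mu(\qs_0)\ge\mu(G)=-\cl(C_0)$; here $\qs_0$ spans, as noted after~\eqref{eq:Q-definition}.

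\emph{The coefficient computation.} By~\eqref{eq:initial-columns}, the coordinate of $z^{-d(x,y)}q_xq_y$ at $e_i^2$ is $\gamma_{i,x}\gamma_{i,y}z^{d(s_i,x)+d(s_i,y)-d(x,y)}$. Relative to the column $z^{0}e_i^2$ of $G$, this coefficient has valuation $d(s_i,x)+d(s_i,y)-d(x,y)\ge0$ by the triangle inequality. For $i<j$, the coordinate at $e_ie_j$ is
\[
z^{-d(x,y)}\bigl(\gamma_{i,x}\gamma_{j,y}z^{d(s_i,x)+d(s_j,y)}+\gamma_{j,x}\gamma_{i,y}z^{d(s_j,x)+d(s_i,y)}\bigr).
\]
Dividing by $z^{-d(s_i,s_j)}$ gives a coefficient whose valuation is at least
\[
\min\bigl\{d(x,s_i)+d(s_i,s_j)+d(s_j,y),\;d(x,s_j)+d(s_j,s_i)+d(s_i,y)\bigr\}-d(x,y)\ge0,
\]
again by the triangle inequality along a path through $s_i$ and $s_j$. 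The $\gamma$ factors lie in $\F$ and have valuation zero, so they do not affect this count.

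The only real content is this triangle-inequality check on the mixed coordinates, which ensures that the weight $z^{-d(x,y)}$ never outweighs the reference weight $z^{-d(s_i,s_j)}$. Everything else is bookkeeping. Note that we do not need $q_{s_i}$ to be a coordinate vector, since the comparison is made directly against the diagonal family $G$.
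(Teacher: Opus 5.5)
Your proof is correct and is essentially the paper's argument: the upper bound comes from Lemma~\ref{lem:terminal-bound} at $X=C_0$, and the lower bound rests on the same triangle-inequality estimate that every entry in row $e_ie_j$ of $\qs_0$ has valuation at least $-d(s_i,s_j)$. The only difference is packaging: you conclude via the second rule of Lemma~\ref{lem:minor-rules} against the diagonal family $\{z^{-d(s_i,s_j)}e_ie_j\}_{i\le j}$, whereas the paper expands each $N\times N$ minor by permutations directly.
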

\begin{proof}
The terminal bound applied at time zero with $X=C_0$ gives $\Psi_0\le-\cl(C_0)$, since $w_0(C_0)=0$.

For the reverse inequality, fix a row $e_ie_j$ of the initial quadratic matrix and a column indexed by $(x,y)$. If $i<j$, its entry is
\[
z^{-d(x,y)}\bigl((q_x)_i(q_y)_j+(q_x)_j(q_y)_i\bigr).
\]
The valuation of the first summand is
\[
d(s_i,x)+d(s_j,y)-d(x,y)\ge-d(s_i,s_j)
\]
by triangle inequality along the path $x,s_i,s_j,y$.
The second summand has the same lower bound. Thus the whole entry has valuation at least $-d(s_i,s_j)$.

If $i=j$, the entry has just one product. Its valuation is $d(s_i,x)+d(s_i,y)-d(x,y)\ge0=-d(s_i,s_i)$. Therefore every entry in row $e_ie_j$ has valuation at least $-d(s_i,s_j)$, including diagonal rows.

Expand any $N\times N$ minor by permutations. Each product chooses one entry from each row and hence has valuation at least
\[
-\sum_{i\le j}d(s_i,s_j)=-\cl(C_0).
\]
The valuation of the sum cannot be smaller. This bounds every maximal minor from below, giving $\Psi_0\ge-\cl(C_0)$.
\end{proof}

\begin{theorem}\label{thm:finite}
Let $X_T$ be a configuration minimizing $w_T$. Then
\begin{align}
\sum_{t=1}^T \extcost_t
&\le(k+1)w_T(X_T)+\cl(C_0)-\cl(X_T),\label{eq:finite-E}\\
\WFA_{C_0}(r_1,\ldots,r_T)
&\le k\,w_T(X_T)+\cl(C_0)-\cl(X_T).\label{eq:finite-WFA}
\end{align}
\end{theorem}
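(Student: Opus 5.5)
The plan is to telescope the per-step bound of Corollary~\ref{cor:step} and then use the two endpoint lemmas. Summing $\extcost_t\le\Psi_t-\Psi_{t-1}$ over $t=1,\ldots,T$ gives
\[
\sum_{t=1}^T\extcost_t\le\Psi_T-\Psi_0 .
\]
Lemma~\ref{lem:initial-bound} gives $\Psi_0=-\cl(C_0)$. Lemma~\ref{lem:terminal-bound}, applied at time $T$ with $X=X_T$, gives $\Psi_T\le(k+1)w_T(X_T)-\cl(X_T)$. Substituting both yields \eqref{eq:finite-E} directly.

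One point needs checking: the potentials $\Psi_t$ used in the corollary must be the same objects as in the endpoint lemmas. The normalization by $B$ before each request leaves $\Psi$ unchanged, as noted at the start of Section~\ref{sec:step}. Hence $\Psi_{t-1}$ computed in the old coordinates agrees with the value after normalization, and the sum telescopes with no gaps. The corollary was derived under the assumption $k\ge2$, which is needed for Lemma~\ref{lem:normalization}.

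For $k=1$ I would argue directly, since the statement is classical. The work function is then $w_t(x)=w_{t-1}(r_t)+d(r_t,x)$, and $\cl\equiv0$. The extended cost at step $t$ is $\max_x\bigl(w_t(x)-w_{t-1}(x)\bigr)$. By the $1$-Lipschitz property of $w_{t-1}$ this is at most $w_t(r_t)-w_{t-1}(r_t)+\max_x\bigl(d(r_t,x)-d(r_t,x)\bigr)$, which telescopes to the required bound $2\min w_T$.

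Alternatively, one can check that the determinant arguments still go through in dimension one with $B$ a scalar, or simply treat $k=1$ separately.

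For \eqref{eq:finite-WFA}, apply Lemma~\ref{lem:accounting}:
\[
\WFA_{C_0}(r_1,\ldots,r_T)\le\sum_t\extcost_t-w_T(C_T).
\]
Then use $w_T(C_T)\ge w_T(X_T)$, since $X_T$ minimizes $w_T$. Combined with \eqref{eq:finite-E}, this gives the $k\,w_T(X_T)$ coefficient.

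The main (mild) obstacle is purely bookkeeping: confirming that the potential is well defined across the coordinate changes and handling the degenerate case $k=1$. Everything substantive is already contained in Corollary~\ref{cor:step} and Lemmas~\ref{lem:terminal-bound} and~\ref{lem:initial-bound}. Theorem~\ref{thm:main} then follows, since $\cl(X_T)\ge0$ and $w_T(X_T)=\OPT_{C_0}(\sigma)$.
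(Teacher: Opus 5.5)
Your main argument is exactly the paper's proof: telescope Corollary~\ref{cor:step}, insert Lemma~\ref{lem:initial-bound} and Lemma~\ref{lem:terminal-bound} with $X=X_T$, then combine Lemma~\ref{lem:accounting} with $w_T(C_T)\ge w_T(X_T)$. You are right that the algebraic machinery needs $k\ge2$ through Lemma~\ref{lem:normalization}, which the paper's proof passes over. However, your treatment of $k=1$ is wrong as written.

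\textbf{The $k=1$ bound.} Your bound $w_t(r_t)-w_{t-1}(r_t)+\max_x\bigl(d(r_t,x)-d(r_t,x)\bigr)$ is identically zero, because $w_t(r_t)=w_{t-1}(r_t)$. It would therefore assert $\extcost_t\le0$. That already fails for a single request with $d(s_1,r_1)=1$, where $w_1(s_1)-w_0(s_1)=2$. Nor is $1$-Lipschitzness of $w_{t-1}$ enough: it only gives $w_t(x)-w_{t-1}(x)\le 2d(r_t,x)$, which is not the bound you need.

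What you need is the explicit form $w_t(x)=m_t+d(r_t,x)$, with $m_t=\min w_t$ and $r_0:=s_1$. This follows by induction from the recurrence, using $m_t=w_{t-1}(r_t)=m_{t-1}+d(r_{t-1},r_t)$. Then $w_t(x)-w_{t-1}(x)=d(r_{t-1},r_t)+d(r_t,x)-d(r_{t-1},x)\le 2d(r_{t-1},r_t)=2(m_t-m_{t-1})$. Since $m_0=0$ and $\cl\equiv0$, this telescopes to the required $2\min w_T$.

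\textbf{The ``$B$ a scalar'' alternative.} This also fails. In dimension one, the only $B$ with $Bq_r=e_1$ is $q_r^{-1}$, whose determinant has valuation $-w(\{r\})\neq 0$ in general. Applying it shifts every determinant valuation by that amount, which breaks the invariant~\eqref{eq:lift-invariant}. So $k=1$ has to be done by hand, or the update rule modified.
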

\begin{proof}
Sum Corollary~\ref{cor:step} over the requests. The potential telescopes, and the endpoint lemmas give
\[
\sum_t \extcost_t\le\Psi_T-\Psi_0
\le(k+1)w_T(X_T)-\cl(X_T)+\cl(C_0).
\]
Subtract $w_T(C_T)\ge w_T(X_T)$ from the accounting inequality in Lemma~\ref{lem:accounting} to obtain~\eqref{eq:finite-WFA}.
\end{proof}

\section*{Acknowledgments}

This work began with a potential function designed by the authors that proves the previously unknown case of $k=3$ servers on arbitrary metrics. Our initial proof for three servers showed the correctness of this potential function by establishing infeasibility of a large family of linear programs; this proof was obtained without AI assistance. Through discussions with ChatGPT 5.5 Pro and Gemini 3.1 Pro, we gained a deeper understanding of the potential. These discussions led to a more symmetric reformulation of the potential, again designed by the authors, and an alternative proof for $k=3$. Based on this, ChatGPT 6 Astra subsequently derived an algebraic proof of correctness for any $k$. The proof in this paper is an adaptation thereof using an explicit column representation of work functions, which provides a more natural algebraic representation. The authors supplied this representation to ChatGPT 6 Astra, which adapted the previous proof and assisted with drafting some sections of this paper. The authors subsequently revised those drafts.

Verification of some ideas of generalization from $k=3$ to $k=4$ has used the University of Oxford Advanced Research Computing (ARC) facility \cite{richards_2015_22558}.

Christian Coester is funded by the European Union (ERC, CCOO, 101165139). Views and opinions expressed are however those of the author(s) only and do not necessarily reflect those of the European Union or the European Research Council. Neither the European Union nor the granting authority can be held responsible for them. Marek Zbysiński is funded by EPSRC grant EP/Z534870/1.

\begin{quote} 
The second author, Elias Koutsoupias, dedicates this work to his constant friends Amos Fiat, Anna Karlin, and Christos Papadimitriou.
\end{quote}

\bibliographystyle{alpha}
\bibliography{references}

@article{AndersonHKRS02,
  author       = {Eric J. Anderson and
                  Kirsten Hildrum and
                  Anna R. Karlin and
                  April Rasala and
                  Michael E. Saks},
  title        = {On list update and work function algorithms},
  journal      = {Theor. Comput. Sci.},
  volume       = {287},
  number       = {2},
  pages        = {393--418},
  year         = {2002},
  doi          = {10.1016/S0304-3975(01)00253-5}
}

@article{ArgueGTG21,
  author       = {C. J. Argue and
                  Anupam Gupta and
                  Ziye Tang and
                  Guru Guruganesh},
  title        = {Chasing Convex Bodies with Linear Competitive Ratio},
  journal      = {J. {ACM}},
  volume       = {68},
  number       = {5},
  pages        = {32:1--32:10},
  year         = {2021},
  doi          = {10.1145/3450349}
}

@article{BansalBMN15,
  author       = {Nikhil Bansal and
                  Niv Buchbinder and
                  Aleksander Madry and
                  Joseph Naor},
  title        = {A Polylogarithmic-Competitive Algorithm for the \emph{k}-Server Problem},
  journal      = {J. {ACM}},
  volume       = {62},
  number       = {5},
  pages        = {40:1--40:49},
  year         = {2015},
  doi          = {10.1145/2783434}
}

@inproceedings{BansalBN10,
  author       = {Nikhil Bansal and
                  Niv Buchbinder and
                  Joseph Naor},
  title        = {Towards the Randomized k-Server Conjecture: {A} Primal-Dual Approach},
  booktitle    = {Proceedings of the Twenty-First Annual {ACM-SIAM} Symposium on Discrete Algorithms, {SODA}},
  pages        = {40--55},
  year         = {2010},
  doi          = {10.1137/1.9781611973075.5}
}

@article{BansalBN12,
  author       = {Nikhil Bansal and
                  Niv Buchbinder and
                  Joseph Naor},
  title        = {A Primal-Dual Randomized Algorithm for Weighted Paging},
  journal      = {J. {ACM}},
  volume       = {59},
  number       = {4},
  pages        = {19:1--19:24},
  year         = {2012},
  doi          = {10.1145/2339123.2339126}
}

@inproceedings{BansalEK17,
  author       = {Nikhil Bansal and
                  Marek Eli{\'{a}}s and
                  Grigorios Koumoutsos},
  title        = {Weighted k-Server Bounds via Combinatorial Dichotomies},
  booktitle    = {58th {IEEE} Annual Symposium on Foundations of Computer Science, {FOCS}},
  pages        = {493--504},
  year         = {2017},
  doi          = {10.1109/FOCS.2017.52}
}

@article{BansalEKN23,
  author       = {Nikhil Bansal and
                  Marek Eli{\'{a}}s and
                  Grigorios Koumoutsos and
                  Jesper Nederlof},
  title        = {Competitive Algorithms for Generalized \emph{k}-Server in Uniform Metrics},
  journal      = {{ACM} Trans. Algorithms},
  volume       = {19},
  number       = {1},
  pages        = {8:1--8:15},
  year         = {2023},
  doi          = {10.1145/3568677}
}

@article{BartalG00,
  author       = {Yair Bartal and
                  Eddie Grove},
  title        = {The harmonic \emph{k}-server algorithm is competitive},
  journal      = {J. {ACM}},
  volume       = {47},
  number       = {1},
  pages        = {1--15},
  year         = {2000},
  doi          = {10.1145/331605.331606}
}

@article{BartalK04,
  author       = {Yair Bartal and
                  Elias Koutsoupias},
  title        = {On the competitive ratio of the work function algorithm for the k-server problem},
  journal      = {Theor. Comput. Sci.},
  volume       = {324},
  number       = {2-3},
  pages        = {337--345},
  year         = {2004},
  doi          = {10.1016/J.TCS.2004.06.001}
}

@article{BeinCL02,
  author       = {Wolfgang W. Bein and
                  Marek Chrobak and
                  Lawrence L. Larmore},
  title        = {The 3-server problem in the plane},
  journal      = {Theor. Comput. Sci.},
  volume       = {289},
  number       = {1},
  pages        = {335--354},
  year         = {2002},
  doi          = {10.1016/S0304-3975(01)00305-X}
}

@article{Ben-DavidBKTW94,
  author       = {Shai Ben{-}David and
                  Allan Borodin and
                  Richard M. Karp and
                  G{\'{a}}bor Tardos and
                  Avi Wigderson},
  title        = {On the Power of Randomization in On-Line Algorithms},
  journal      = {Algorithmica},
  volume       = {11},
  number       = {1},
  pages        = {2--14},
  year         = {1994},
  doi          = {10.1007/BF01294260}
}

@inproceedings{BienkowskiBCJ20,
  author       = {Marcin Bienkowski and
                  Jaroslaw Byrka and
                  Christian Coester and
                  Lukasz Jez},
  title        = {Unbounded lower bound for k-server against weak adversaries},
  booktitle    = {Proceedings of the 52nd Annual {ACM} {SIGACT} Symposium on Theory of Computing, {STOC}},
  pages        = {1165--1169},
  year         = {2020},
  doi          = {10.1145/3357713.3384306}
}

@book{BorodinE98,
  author       = {Allan Borodin and
                  Ran El{-}Yaniv},
  title        = {Online computation and competitive analysis},
  publisher    = {Cambridge University Press},
  year         = {1998}
}

@inproceedings{BijoyMC26,
  author       = {Adithya Bijoy and
                  Ankit Mondal and
                  Ashish Chiplunkar},
  title        = {Weighted k-Server Admits an Exponentially Competitive Algorithm},
  booktitle    = {Proceedings of the 2026 Annual {ACM-SIAM} Symposium on Discrete Algorithms, {SODA}},
  pages        = {4188--4208},
  year         = {2026},
  doi          = {10.1137/1.9781611978971.154}
}

@article{BorodinLS92,
  author       = {Allan Borodin and
                  Nathan Linial and
                  Michael E. Saks},
  title        = {An Optimal On-Line Algorithm for Metrical Task System},
  journal      = {J. {ACM}},
  volume       = {39},
  number       = {4},
  pages        = {745--763},
  year         = {1992},
  url          = {https://doi.org/10.1145/146585.146588},
  doi          = {10.1145/146585.146588},
  bibsource    = {dblp computer science bibliography, https://dblp.org}
}

@article{BrilliantovBA26,
  author       = {Kirill Brilliantov and
                  Etienne Bamas and
                  Emmanuel Abb{\'{e}}},
  title        = {k-server-bench: Automating Potential Discovery for the k-Server Conjecture},
  journal      = {arXiv preprint},
  year         = {2026},
  doi          = {10.48550/ARXIV.2604.07240}
}

@article{BubeckCLL21,
  author       = {S{\'{e}}bastien Bubeck and
                  Michael B. Cohen and
                  James R. Lee and
                  Yin Tat Lee},
  title        = {Metrical Task Systems on Trees via Mirror Descent and Unfair Gluing},
  journal      = {{SIAM} J. Comput.},
  volume       = {50},
  number       = {3},
  pages        = {909--923},
  year         = {2021},
  doi          = {10.1137/19M1237879}
}

@inproceedings{BubeckCLLM18,
  author       = {S{\'{e}}bastien Bubeck and
                  Michael B. Cohen and
                  Yin Tat Lee and
                  James R. Lee and
                  Aleksander Madry},
  title        = {k-server via multiscale entropic regularization},
  booktitle    = {Proceedings of the 50th Annual {ACM} {SIGACT} Symposium on Theory of Computing, {STOC}},
  pages        = {3--16},
  year         = {2018},
  doi          = {10.1145/3188745.3188798}
}

@article{BubeckCR25,
  author       = {S{\'{e}}bastien Bubeck and
                  Christian Coester and
                  Yuval Rabani},
  title        = {Shortest Paths Without a Map, but with an Entropic Regularizer},
  journal      = {{SIAM} J. Comput.},
  volume       = {54},
  number       = {5},
  pages        = {S22--265},
  year         = {2025},
  doi          = {10.1137/22M1539149}
}

@inproceedings{BubeckCR23,
  author       = {S{\'{e}}bastien Bubeck and
                  Christian Coester and
                  Yuval Rabani},
  title        = {The Randomized k-Server Conjecture Is False!},
  booktitle    = {Proceedings of the 55th Annual {ACM} Symposium on Theory of Computing, {STOC}},
  pages        = {581--594},
  publisher    = {{ACM}},
  year         = {2023},
  doi          = {10.1145/3564246.3585132}
}

@article{BuchbinderN09,
  author       = {Niv Buchbinder and
                  Joseph Naor},
  title        = {The Design of Competitive Online Algorithms via a Primal-Dual Approach},
  journal      = {Found. Trends Theor. Comput. Sci.},
  volume       = {3},
  number       = {2-3},
  pages        = {93--263},
  year         = {2009},
  doi          = {10.1561/0400000024}
}

@article{Burley96,
  author       = {William R. Burley},
  title        = {Traversing Layered Graphs Using the Work Function Algorithm},
  journal      = {J. Algorithms},
  volume       = {20},
  number       = {3},
  pages        = {479--511},
  year         = {1996},
  doi          = {10.1006/JAGM.1996.0024}
}

@article{ChrobakKPV91,
  author       = {Marek Chrobak and
                  Howard J. Karloff and
                  T. H. Payne and
                  Sundar Vishwanathan},
  title        = {New Results on Server Problems},
  journal      = {{SIAM} J. Discret. Math.},
  volume       = {4},
  number       = {2},
  pages        = {172--181},
  year         = {1991},
  doi          = {10.1137/0404017}
}

@article{ChrobakL91,
  author       = {Marek Chrobak and
                  Lawrence L. Larmore},
  title        = {An Optimal On-Line Algorithm for k-Servers on Trees},
  journal      = {{SIAM} J. Comput.},
  volume       = {20},
  number       = {1},
  pages        = {144--148},
  year         = {1991},
  doi          = {10.1137/0220008}
}

@inproceedings{ChrobakL92,
  author       = {Marek Chrobak and
                  Lawrence L. Larmore},
  title        = {The Server Problem and On-Line Games},
  booktitle    = {On-Line Algorithms, Proceedings of a {DIMACS} Workshop},
  series       = {{DIMACS} Series in Discrete Mathematics and Theoretical Computer Science},
  volume       = {7},
  year         = {1992},
  doi          = {10.1090/DIMACS/007/02},
}

@inproceedings{CoesterK19,
  author       = {Christian Coester and
                  Elias Koutsoupias},
  title        = {The online $k$-taxi problem.},
  booktitle    = {Proceedings of the 51st Annual {ACM} {SIGACT} Symposium on Theory of Computing, {STOC}},
  pages        = {1136--1147},
  year         = {2019},
  doi          = {10.1145/3313276.3316370}
}

@inproceedings{CoesterK21,
  author       = {Christian Coester and
                  Elias Koutsoupias},
  title        = {Towards the k-Server Conjecture: {A} Unifying Potential, Pushing the Frontier to the Circle},
  booktitle    = {48th International Colloquium on Automata, Languages, and Programming, {ICALP}},
  volume       = {198},
  pages        = {57:1--57:20},
  year         = {2021},
  doi          = {10.4230/LIPICS.ICALP.2021.57}
}

@article{CoesterKL21,
  author       = {Christian Coester and
                  Elias Koutsoupias and
                  Philip Lazos},
  title        = {The Infinite Server Problem},
  journal      = {{ACM} Trans. Algorithms},
  volume       = {17},
  number       = {3},
  pages        = {20:1--20:23},
  year         = {2021},
  doi          = {10.1145/3456632}
}

@article{CoesterL22,
  author       = {Christian Coester and
                  James R. Lee},
  title        = {Pure Entropic Regularization for Metrical Task Systems},
  journal      = {Theory Comput.},
  volume       = {18},
  pages        = {1--24},
  year         = {2022},
  doi          = {10.4086/TOC.2022.V018A023}
}

@inproceedings{CoesterP26,
  author       = {Christian Coester and
                  Tze{-}Yang Poon},
  title        = {Online 3-Taxi on General Metrics},
  booktitle    = {Proceedings of the 2026 Annual {ACM-SIAM} Symposium on Discrete Algorithms, {SODA}},
  pages        = {6659--6673},
  year         = {2026},
  doi          = {10.1137/1.9781611978971.238}
}

@inproceedings{CoesterT26,
  author       = {Christian Coester and
                  Alexa Tudose},
  title        = {Chasing Small Sets Optimally Against Adaptive Adversaries},
  booktitle    = {53rd International Colloquium on Automata, Languages, and Programming, {ICALP}},
  volume       = {374},
  pages        = {67:1--67:22},
  year         = {2026},
  doi          = {10.4230/LIPICS.ICALP.2026.67}
}

@inproceedings{CoesterU25,
  author       = {Christian Coester and
                  Jack Umenberger},
  title        = {Smoothed Analysis of Online Metric Problems},
  booktitle    = {33rd Annual European Symposium on Algorithms, {ESA}},
  series       = {LIPIcs},
  volume       = {351},
  pages        = {115:1--115:14},
  year         = {2025},
  doi          = {10.4230/LIPICS.ESA.2025.115}
}

@inproceedings{DehghaniEHLS17,
  author       = {Sina Dehghani and
                  Soheil Ehsani and
                  MohammadTaghi Hajiaghayi and
                  Vahid Liaghat and
                  Saeed Seddighin},
  title        = {Stochastic k-Server: How Should Uber Work?},
  booktitle    = {44th International Colloquium on Automata, Languages, and Programming, {ICALP}},
  series       = {LIPIcs},
  volume       = {80},
  pages        = {126:1--126:14},
  year         = {2017},
  doi          = {10.4230/LIPICS.ICALP.2017.126}
}

@article{FiatKLMSY91,
  author       = {Amos Fiat and
                  Richard M. Karp and
                  Michael Luby and
                  Lyle A. McGeoch and
                  Daniel Dominic Sleator and
                  Neal E. Young},
  title        = {Competitive Paging Algorithms},
  journal      = {J. Algorithms},
  volume       = {12},
  number       = {4},
  pages        = {685--699},
  year         = {1991},
  doi          = {10.1016/0196-6774(91)90041-V}
}

@article{FiatR94,
  author       = {Amos Fiat and
                  Moty Ricklin},
  title        = {Competitive Algorithms for the Weighted Server Problem},
  journal      = {Theor. Comput. Sci.},
  volume       = {130},
  number       = {1},
  pages        = {85--99},
  year         = {1994},
  doi          = {10.1016/0304-3975(94)90154-6}
}

@inproceedings{FiatRR90,
  author       = {Amos Fiat and
                  Yuval Rabani and
                  Yiftach Ravid},
  title        = {Competitive k-Server Algorithms (Extended Abstract)},
  booktitle    = {31st Annual Symposium on Foundations of Computer Science, {FOCS}},
  pages        = {454--463},
  year         = {1990},
  doi          = {10.1109/FSCS.1990.89566}
}

@article{FiatFKRRV98,
  author       = {Amos Fiat and
                  Dean P. Foster and
                  Howard J. Karloff and
                  Yuval Rabani and
                  Yiftach Ravid and
                  Sundar Vishwanathan},
  title        = {Competitive Algorithms for Layered Graph Traversal},
  journal      = {{SIAM} J. Comput.},
  volume       = {28},
  number       = {2},
  pages        = {447--462},
  year         = {1998},
  doi          = {10.1137/S0097539795279943}
}

@inproceedings{FreiKSW25,
  author       = {Fabian Frei and
                  Dennis Komm and
                  Moritz Stocker and
                  Philip Whittington},
  title        = {Time-Optimal k-Server},
  booktitle    = {36th International Symposium on Algorithms and Computation, {ISAAC}},
  series       = {LIPIcs},
  volume       = {359},
  pages        = {32:1--32:17},
  year         = {2025},
  doi          = {10.4230/LIPICS.ISAAC.2025.32}
}

@article{GehnenKN25,
  author       = {Matthias Gehnen and
                  Ralf Klasing and
                  {\'{E}}mile Naquin},
  title        = {Graph Exploration with Edge Weight Estimates},
  journal      = {arXiv preprint},
  year         = {2025},
  doi          = {10.48550/ARXIV.2501.18496}
}

@inproceedings{Grove91,
  author       = {Edward F. Grove},
  title        = {The Harmonic Online K-Server Algorithm Is Competitive},
  booktitle    = {Proceedings of the 23rd Annual {ACM} Symposium on Theory of Computing {STOC}},
  pages        = {260--266},
  year         = {1991},
  doi          = {10.1145/103418.103448}
}

@inproceedings{GuptaKP24,
  author       = {Anupam Gupta and
                  Amit Kumar and
                  Debmalya Panigrahi},
  title        = {Poly-logarithmic Competitiveness for the \emph{k}-Taxi Problem},
  booktitle    = {Proceedings of the 2024 {ACM-SIAM} Symposium on Discrete Algorithms, {SODA}},
  pages        = {4220--4246},
  year         = {2024},
  doi          = {10.1137/1.9781611977912.146}
}

@article{HuangZ24,
  author       = {Zhiyi Huang and
                  Hanwen Zhang},
  title        = {Deterministic 3-server on a circle and the limitation of canonical potentials},
  journal      = {Theor. Comput. Sci.},
  volume       = {1020},
  year         = {2024},
  doi          = {10.1016/J.TCS.2024.114844}
}

@article{Koutsoupias09,
  author       = {Elias Koutsoupias},
  title        = {The k-server problem},
  journal      = {Comput. Sci. Rev.},
  volume       = {3},
  number       = {2},
  pages        = {105--118},
  year         = {2009},
  doi          = {10.1016/J.COSREV.2009.04.002}
}

@inproceedings{KoutsoupiasN03,
  author       = {Elias Koutsoupias and
                  Akash Nanavati},
  title        = {The Online Matching Problem on a Line},
  booktitle    = {Approximation and Online Algorithms, First International Workshop, {WAOA}},
  pages        = {179--191},
  year         = {2003},
  doi          = {10.1007/978-3-540-24592-6\_14}
}

@article{KoutsoupiasP95,
  author       = {Elias Koutsoupias and
                  Christos H. Papadimitriou},
  title        = {On the k-Server Conjecture},
  journal      = {J. {ACM}},
  volume       = {42},
  number       = {5},
  pages        = {971--983},
  year         = {1995},
  doi          = {10.1145/210118.210128}
}

@article{KoutsoupiasP96,
  author       = {Elias Koutsoupias and
                  Christos H. Papadimitriou},
  title        = {The 2-Evader Problem},
  journal      = {Inf. Process. Lett.},
  volume       = {57},
  number       = {5},
  pages        = {249--252},
  year         = {1996},
  doi          = {10.1016/0020-0190(96)00010-5}
}

@inproceedings{Lee18,
  author       = {James R. Lee},
  editor       = {Mikkel Thorup},
  title        = {Fusible HSTs and the Randomized k-Server Conjecture},
  booktitle    = {59th {IEEE} Annual Symposium on Foundations of Computer Science, {FOCS}},
  pages        = {438--449},
  year         = {2018},
  doi          = {10.1109/FOCS.2018.00049}
}

@inproceedings{ManasseMS88,
  author       = {Mark S. Manasse and
                  Lyle A. McGeoch and
                  Daniel Dominic Sleator},
  title        = {Competitive Algorithms for On-line Problems},
  booktitle    = {Proceedings of the 20th Annual {ACM} Symposium on Theory of Computing, {STOC}},
  pages        = {322--333},
  year         = {1988},
  doi          = {10.1145/62212.62243}
}

@misc{richards_2015_22558,
  author       = {Richards, Andrew},
  title        = {{University of Oxford Advanced Research Computing}},
  month        = aug,
  year         = 2015,
  publisher    = {Zenodo},
  doi          = {10.5281/zenodo.22558},
  url          = {https://doi.org/10.5281/zenodo.22558}
}

@inproceedings{Sellke20,
  author       = {Mark Sellke},
  title        = {Chasing Convex Bodies Optimally},
  booktitle    = {Proceedings of the 2020 {ACM-SIAM} Symposium on Discrete Algorithms, {SODA}},
  pages        = {1509--1518},
  year         = {2020},
  doi          = {10.1137/1.9781611975994.92}
}

@article{Sitters14,
  author       = {Ren{\'{e}} Sitters},
  title        = {The Generalized Work Function Algorithm Is Competitive for the Generalized 2-Server Problem},
  journal      = {{SIAM} J. Comput.},
  volume       = {43},
  number       = {1},
  pages        = {96--125},
  year         = {2014},
  doi          = {10.1137/120885309}
}

@article{SittersS06,
  author       = {Ren{\'{e}} A. Sitters and
                  Leen Stougie},
  title        = {The generalized two-server problem},
  journal      = {J. {ACM}},
  volume       = {53},
  number       = {3},
  pages        = {437--458},
  year         = {2006},
  doi          = {10.1145/1147954.1147960}
}

@article{SleatorT85,
  author       = {Daniel Dominic Sleator and
                  Robert Endre Tarjan},
  title        = {Amortized Efficiency of List Update and Paging Rules},
  journal      = {Commun. {ACM}},
  volume       = {28},
  number       = {2},
  pages        = {202--208},
  year         = {1985},
  doi          = {10.1145/2786.2793}
}

@article{KelsoC82,
  author    = {Kelso, Alexander S., Jr. and Crawford, Vincent P.},
  title     = {Job Matching, Coalition Formation, and Gross Substitutes},
  journal   = {Econometrica},
  volume    = {50},
  number    = {6},
  pages     = {1483--1504},
  year      = {1982},
  publisher = {The Econometric Society},
  doi       = {10.2307/1913392}
}

@article{DressW1990,
  author    = {Dress, Andreas W. M. and Wenzel, Walter},
  title     = {Valuated matroids: A new look at the greedy algorithm},
  journal   = {Applied Mathematics Letters},
  volume    = {3},
  number    = {2},
  pages     = {33--35},
  year      = {1990},
  publisher = {Elsevier},
  doi       = {10.1016/0893-9659(90)90008-G}
}

@article{DressW1992,
  author    = {Dress, Andreas W. M. and Wenzel, Walter},
  title     = {Valuated matroids},
  journal   = {Advances in Mathematics},
  volume    = {93},
  number    = {2},
  pages     = {214--250},
  year      = {1992},
  publisher = {Elsevier},
  doi       = {10.1016/0001-8708(92)90028-J}
}

@book{Murota2003,
  author    = {Murota, Kazuo},
  title     = {Discrete Convex Analysis},
  series    = {SIAM Monographs on Discrete Mathematics and Applications},
  publisher = {Society for Industrial and Applied Mathematics},
  year      = {2003},
  doi       = {10.1137/1.9780898718508}
}

@article{SpeyerS2004,
  author    = {Speyer, David and Sturmfels, Bernd},
  title     = {The Tropical Grassmannian},
  journal   = {Advances in Geometry},
  volume    = {4},
  number    = {3},
  pages     = {389--411},
  year      = {2004},
  publisher = {De Gruyter},
  doi       = {10.1515/advg.2004.023}
}

@book{MaclaganS2015,
  author    = {Maclagan, Diane and Sturmfels, Bernd},
  title     = {Introduction to Tropical Geometry},
  series    = {Graduate Studies in Mathematics},
  volume    = {161},
  publisher = {American Mathematical Society},
  address   = {Providence, RI},
  year      = {2015},
  doi       = {10.1090/gsm/161}
}

@inproceedings{Koutsoupias99,
  author       = {Elias Koutsoupias},
  title        = {Weak Adversaries for the k-Server Problem},
  booktitle    = {40th Annual Symposium on Foundations of Computer Science, {FOCS}},
  pages        = {444--449},
  publisher    = {{IEEE} Computer Society},
  year         = {1999},
  doi          = {10.1109/SFFCS.1999.814616}
}

@book{muir1960, 
author = {Muir, Thomas}, 
title = {The Theory of Determinants in the Historical Order of Development}, 
volumes = {1-2}, 
publisher = {Dover Publications}, 
address = {New York}, 
year = {1960}, 
note = {Reprint of the 1906--1923 edition} }

@incollection{schlaefli1851,
  author    = {Schl{\"a}fli, Ludwig},
  title     = {{\"U}ber die Resultante eines Systemes mehrerer algebraischer Gleichungen},
  booktitle = {Gesammelte mathematische Abhandlungen},
  volume    = {2},
  pages     = {1--54},
  year      = {1851},
  publisher = {Birkh{\"a}user},
  address   = {Basel}
}

\appendix

\section{Omitted proofs}

\subsection*{Proof of Lemma~\ref{lem:accounting}} \label{apdx:accounting}
\begin{proof}
By the work function recurrence, we have
\begin{align*}
w_t(C_{t-1}) &= \min_{x\in C_{t-1}} w_{t-1}(X-x+r_t) + d(r_t,x)\\
&= w_{t}(C_t) + D(C_{t-1},C_t)
\end{align*}
where the last equation by definition of WFA and the fact that $w_{t-1}(C)=w_t(C)$ for all $C\ni r_t$.

Subtract $w_{t-1}(C_{t-1})$ from the equation and sum over $t$. The work function values on the right-hand side telescope, and $w_0(C_0)=0$, so
\[
w_T(C_T) + \sum_t D(C_{t-1}, C_t)
=\sum_t\bigl(w_t(C_{t-1})-w_{t-1}(C_{t-1})\bigr)
\le\sum_t \extcost_t.\qedhere
\]
\end{proof}

\subsection*{Proof of Lemma~\ref{lem:symdet}}\label{apdx:sym2proof}

\begin{proof}
Both sides multiply under composition of matrices. Gaussian elimination expresses every invertible matrix as a product of elementary changes of basis, so it suffices to check three types.

First, scaling $e_i$ by a nonzero $c$ scales $e_i^2$ by $c^2$ and each of the $k-1$ products $e_ie_j$, $j\ne i$, by $c$. It leaves all other products unchanged. The determinant in the quadratic space is therefore $c^{2+k-1}=c^{k+1}$.

Second, swapping $e_i$ and $e_j$ swaps their squares and swaps $e_ie_\ell$ with $e_je_\ell$ for each $\ell\notin\{i,j\}$. It leaves $e_ie_j$ fixed. These are $k-1$ transpositions, giving determinant $(-1)^{k-1}=(-1)^{k+1}$.

Third, consider the shear that replaces $e_i$ by $e_i+c e_j$ and fixes the other basis vectors. Order the quadratic monomials by their exponent of $e_i$. Each monomial maps to itself plus monomials having smaller exponent of $e_i$. Thus the induced matrix is triangular with all diagonal entries equal to one, and has determinant one. The original shear also has determinant one. This proves the identity for all three types and hence for their products.
\end{proof}

\end{document}